\newtheorem{theorem}{Theorem}[section]
\newtheorem{lemma}[theorem]{Lemma}
\newtheorem{corollary}[theorem]{Corollary}
\newtheorem{observation}[theorem]{Observation}
\newproof{proof}{Proof}
\newcommand{\cA}{{\mathcal A}}
\newcommand{\cC}{{\mathcal C}}
\newcommand{\cI}{{\mathcal I}}
\newcommand{\cL}{{\mathcal L}}
\newcommand{\cN}{{\mathcal N}}
\newcommand{\cO}{{\mathcal O}}
\newcommand{\cP}{{\mathcal P}}
\newcommand{\cS}{{\mathcal S}}
\newcommand{\cT}{{\mathcal T}}
\newcommand{\mrca}{{\rm mrca}}
\journal{XXXX}
\begin{document}

\begin{frontmatter}



\title{On the existence of a cherry-picking sequence}


\author[label1]{Janosch D\"ocker}
\ead{janosch.doecker@uni-tuebingen.de}

\author[label2]{Simone Linz}
\ead{s.linz@auckland.ac.nz}

\address[label1]{Department of Computer Science, University of T\"ubingen, Germany}
\address[label2]{Department of Computer Science, University of Auckland, New Zealand}

\begin{abstract}
Recently, the minimum number of reticulation events that is required to simultaneously embed a collection $\cP$ of rooted binary phylogenetic trees into a so-called temporal network has been characterized in terms of cherry-picking sequences. Such a sequence is a particular ordering on the leaves of the trees in $\cP$. However, it is well-known that not all collections of phylogenetic trees have a cherry-picking sequence. In this paper, we show that the problem of deciding whether or not $\cP$ has a cherry-picking sequence is NP-complete for when $\cP$ contains at least eight rooted binary phylogenetic trees. Moreover, we use automata theory to show that the problem can be solved in polynomial time if the number of trees in $\cP$ and the number of cherries in each such tree are bounded by a constant.
\end{abstract}

\begin{keyword}
2P2N-3-SAT\sep cherry\sep cherry-picking sequence\sep Intermezzo\sep phylogenetic tree\sep temporal phylogenetic network
\end{keyword}

\end{frontmatter}


\section{Introduction}
To represent evolutionary relationships among species, phylogenetic trees have long been a powerful tool. However, as we now not only acknowledge speciation but also non-tree-like processes such as hybridization and lateral gene transfer to be driving forces in the evolution of certain groups of organisms (e.g. bacteria, plants, and fish)~\cite{mallet16,soucy15}, phylogenetic networks become more widely used to represent ancestral histories. A phylogenetic network is a generalization of a rooted phylogenetic tree. More precisely, such a network is a rooted directed acyclic graph whose leaves are labeled~\cite{huson10}.

The following optimization problem, which is biologically relevant and mathematically challenging, motivates much of the theoretical work that has been done in reconstructing phylogenetic networks from phylogenetic trees. Given a collection $\cP$ of rooted binary phylogenetic trees on a set of species such that $\cP$ correctly represents the tree-like evolution of different parts of the species' genomes, what is the smallest number of reticulation events that is required to simultaneously embed the trees in $\cP$ into a phylogenetic network? Here, reticulation events are collectively referring to all non-tree-like events and they are represented by vertices in a phylogenetic network whose in-degree is at least two. Without any structural constraints on a phylogenetic network, it is well-known that $\cP$ can always be embedded into such a network~\cite{baroni05,semple07} and, hence, the optimization problem is well-defined. Moreover, despite the problem being NP-hard~\cite{bordewich07}, even for when $|\cP|=2$, several exact algorithms have been developed that, given two rooted phylogenetic trees, construct a phylogenetic network whose number of reticulation events is minimized over the space of all networks that embed both trees~\cite{albrecht12,chen13,piovesan12,wu10}. 

Motivated by the introduction of temporal networks~\cite{baroni06,moret04}, which are phylogenetic networks that satisfy several time constraints, Humphries et al.~\cite{humphries13,humphries13a} recently investigated the special case of the aforementioned optimization problem for when one is interested in minimizing the number of reticulation events over the smaller space of all temporal networks that embed a given collection of rooted binary phylogenetic trees. More precisely, in the context of their two papers, the authors considered {\it temporal networks} to be phylogenetic networks  that satisfy the following three constraints:
\begin{enumerate}[(1)]
\item speciation events occur successively,
\item reticulation events occur instantaneously, and 
\item each non-leaf vertex has a child whose in-degree is one. 
\end{enumerate}
The second constraint implies that the three species that are involved in a reticulation event, i.e. the new species resulting from this event and its two distinct parents, must coexist in time. Moreover, a phylogenetic network that satisfies the third constraint (but not necessarily the first two constraints) is referred to as a {\it tree-child} network in the literature~\cite{cardona12}. Intuitively, if a phylogenetic network $\cN$ is temporal, then one can assign a time stamp to each of its vertices such that the following holds for each edge $(u,v)$ in $\cN$. If $v$ is a reticulation, then the time stamp assigned to $u$ is the same as the time stamp assigned to $v$. Otherwise, the time stamp assigned to $v$ is strictly greater than that assigned to $u$. Baroni et al.~\cite{baroni06} showed that it can be checked in polynomial time whether or not a given phylogenetic network satisfies the first two constraints. 

Humphries et al.~\cite{humphries13} have established a new characterization to compute the minimum number of reticulation events that is needed to simultaneously embed an arbitrarily large collection $\cP$ of rooted binary phylogenetic trees into a temporal network. This characterization, which is formally defined in Section~\ref{sec:prelim}, is in terms of {\it cherries},  and the existence of a particular type of sequence on the leaves of the trees, called a {\it cherry-picking sequence}. It was shown that such a sequence for $\cP$ exists if and only if the trees in $\cP$ can simultaneously be embedded into a temporal network~\cite[Theorem 1]{humphries13}. Moreover, a cherry-picking sequence for $\cP$ can be exploited further to compute the minimum number of reticulation events that is needed over all temporal networks. Importantly, not every collection $\cP$ is guaranteed to have a solution, i.e. there may be no cherry-picking sequence for $\cP$ and, hence no temporal network that embeds all trees in $\cP$.
It was left as an open problem by Humphries et al.~\cite{humphries13} to analyze the computational complexity of deciding whether or not $\cP$ has a cherry-picking sequence for when $|\cP|=2$.

In this paper, we make progress towards this question and show that it is NP-complete to decide if $\cP$ has a cherry-picking sequence for when $|\cP|\geq 8$. Translated into the language of phylogenetic networks, this result directly implies that it is computationally hard to decide if a collection of at least eight rooted binary phylogenetic trees can simultaneously be embedded into a temporal network. To establish our result, we use a reduction from a variant of the {\sc Intermezzo} problem~\cite{guttmann06}. On a more positive note, we show that deciding if $\cP$ has a cherry-picking sequence can be done in polynomial time if the number of trees and the number of cherries in each such tree are bounded by a constant. To this end, we explore connections between phylogenetic trees and automata theory and show how the problem at hand can be solved by using a deterministic finite automaton.

The remainder of the paper is organized as follows.  The next section contains notation and terminology that is used throughout the paper. Section~\ref{sec:intermezzo} establishes NP-completeness of a variant of the {\sc Intermezzo} problem which is then, in turn, used in Section~\ref{sec:cps} to show that it is NP-complete to decide if $\cP$ has a cherry-picking sequence {for when $|\cP|\geq 8$}. In Section~\ref{sec:cherries}, we show that deciding if $\cP$ has a cherry-picking sequence is polynomial-time solvable if the number of cherries in each tree and the size of $\cP$ are bounded by a constant. We finish the paper with some concluding remarks in Section~\ref{sec:con}.

\section{Preliminaries}\label{sec:prelim}
This section provides notation and terminology that is used in the subsequent sections. Throughout this paper, $X$ denotes a finite set.

\noindent{\bf Phylogenetic trees.}  A {\em rooted binary phylogenetic $X$-tree} $\cT$ is a rooted tree with leaf set $X$ and, apart from the root which has degree two, all interior vertices have degree three. Furthermore, a pair of leaves $\{a,b\}$ of $\cT$ is called a {\it cherry} if $a$ and $b$ are leaves that are adjacent to a common vertex. Note that every rooted binary phylogenetic tree has at least one cherry. We denote by $c_\cT$ the number of cherries in $\cT$. We now turn to a rooted binary phylogenetic tree with exactly one cherry. More precisely, we call $\cT$ a {\it caterpillar} if $|X|=n\geq 2$ and the elements in $X$ can be ordered, say $x_1, x_2, \ldots, x_n$, so that $\{x_1, x_2\}$ is a cherry and, if $p_i$ denotes the parent of $x_i$, then, for all $i\in \{3, 4, \ldots, n\}$, we have $(p_i, p_{i-1})$ as an edge in $\cT$, in which case we denote the  caterpillar by $(x_1, x_2, \ldots, x_n)$. To illustrate, Figure~\ref{fig:caterpillar} shows the caterpillar $(D_1,D_2,\ldots,D_{|A'|})$ with cherry $\{D_1,D_2\}$. Two rooted binary phylogenetic $X$-trees $\cT$ and $\cT'$ are said to be {\it isomorphic} if the identity map on $X$ induces a graph isomorphism on the underlying trees.

\noindent{\bf Subtrees.} Now, let $\cT$ be a rooted binary phylogenetic $X$-tree, and let $X'=\{x_1,x_2,\ldots,x_k\}$ be a subset of $X$. The minimal rooted subtree of $\cT$ that connects all vertices in $X'$ is denoted by $\cT(X')$.
Furthermore, the rooted binary phylogenetic tree obtained from $\cT(X')$ by contracting all non-root
degree-$2$ vertices is the {\it restriction of $\cT$ to $X'$} and is denoted by $\cT|X'$. We also write $\cT[-x_1,x_2,\ldots,x_k]$ or $\cT[-X']$ for short to denote $\cT|(X-X')$. For a set $\cP=\{\cT_1,\cT_2,\ldots,\cT_m\}$ of rooted binary phylogenetic $X$-trees, we write $\cP|X'$ (resp. $\cP[-X']$) when referring to the set  $\{\cT_1|X',\cT_2|X',\ldots,\cT_m|X'\}$ (resp.~$\{\cT_1[-X'],\cT_2[-X'],\ldots,\cT_m[-X']\}$).  Lastly, a rooted binary phylogenetic tree is {\it pendant} in $\cT$ if it can be detached from $\cT$ by deleting a single edge.

\noindent{\bf Cherry-picking sequences.} Let $\cP$ be a set of rooted binary phylogenetic $X$-trees with $|X|=n$. We say that an ordering of the elements in $X$, say $(x_1,x_2,\ldots,x_n)$, is a {\it cherry-picking sequence} for $\cP$ precisely if each $x_i$ with $i\in\{1,2,\ldots,n-1\}$ labels a leaf of a cherry in each tree that is contained in $\cP[-x_1,x_2,\ldots,x_{i-1}]$. Clearly, if $|\cP|=1$, then $\cP$ has a cherry-picking sequence. However, if $|\cP|>1$, then $\cP$ may or may not have a cherry-picking sequence.

We now formally state the decision problem that this paper is centered around.

\noindent{\sc CPS-Existence} \\
{\bf Instance.} A collection $\cP$ of rooted binary phylogenetic $X$-trees. \\
{\bf Question.} Does there exist a cherry-picking sequence for $\cP$?

\noindent The significance of {\sc CPS-Existence} is the problem's equivalence to the question whether or not all trees in $\cP$ can simultaneously be embedded into a rooted phylogenetic network that satisfies the three temporal constraints as alluded to in the introduction.

\noindent{\bf Automata and languages.} Let $\Sigma$ be an alphabet. A {\it language} $\cL$ is a subset of all possible strings (also called {\it words}) whose symbols are in $\Sigma$. More precisely, $\cL$ is a subset of $\Sigma^*$, where the operator $*$ is the Kleene star. A {\it deterministic finite automaton} (or short {\it automaton}) 
is a tuple $\cA = (Q, \Sigma, \delta, q_{\text{ini}}, F)$, where
\begin{enumerate}[(i)]
\item $Q$ is a finite set of states,
\item $\Sigma$ is a finite alphabet,  
\item $\delta \colon Q \times \Sigma \rightarrow Q$ is a transition relation,
\item $q_{\text{ini}}$ is the initial state, and
\item $F \subseteq Q$ are final states.  
\end{enumerate}
A  given automaton $\cA$ {\it accepts} a word $w = a_1a_2\ldots a_{n}$ if and only if $\cA$ is in a final state after having read all symbols from left to right, i.e. $$\delta(\ldots\delta(\delta(q_{\text{ini}},a_1),a_2),\ldots a_{n})\in F.$$ The language $\cL(\cA) \subseteq \Sigma^*$ that is {\it recognized} by $\cA$ is defined as the set of words that $\cA$ accepts. 
For the automata constructed in this paper, we have $|F| = 1$ and $\delta$ being a total function that maps each pair of a state in $Q$ and a symbol in $\Sigma$ to a state in $Q$. For a detailed introduction to automata theory and languages, see the book by Hopcroft and Ullman~\cite{hopcroft79}. 


\section{A variant of the {\sc Intermezzo} problem}\label{sec:intermezzo}

In this section, we establish NP-completeness of a variant of the ordering problem {\sc Intermezzo}. Let $A$ be a finite set, and let $\cO$ be an ordering on the elements in $A$. For two elements $a$ and $b$ in $A$, we write $a<b$ precisely if $a$ precedes $b$ in $\cO$. With this notation in hand, we now formally state {\sc Intermezzo} which was shown to be NP-complete via reduction from {\sc 3-SAT}~\cite[Lemma 1]{guttmann06}.

\noindent{\sc Intermezzo} \\
{\bf Instance.} A finite set $A$, a collection $B$ of pairs from $A$, and a collection $C$ of pairwise-disjoint triples of distinct elements in $A$. \\
{\bf Question.} Does there exist a total linear ordering on the elements in $A$ such that $a_i<a_j$ for each $(a_i,a_j)$ in $B$, and $a_i<a_j<a_k$ or $a_j<a_k<a_i$ for each $(a_i,a_j,a_k)$ in $C$?

\noindent {\bf Example.} Consider the following instance of {\sc Intermezzo} with three pairs and two disjoint triples (when viewed as sets):
\begin{align*}
A &= \{a_1,\, a_2,\, a_3,\, a_4,\, a_5, a_6\}, \\
B &= \{(a_1,\, a_6),\, (a_4,\, a_1),\, (a_4,\, a_3)\}, \\
C &= \{(a_1,\,a_2,\,a_3),\,(a_4,\,a_5,\,a_6)\}.
\end{align*}
A total linear ordering on the elements in $A$ that satisfies all constraints defined by $B$ and $C$ is 
\[
\mathcal{O} = (a_2,\, a_4,\, a_3,\, a_1,\, a_5,\, a_6).
\]

\noindent While each element $a_i\in A$ can appear an unbounded number of times in the input of a given {\sc Intermezzo} instance, this number is bounded from above by $N$ in the following {\sc Intermezzo} variant.

\noindent{\sc $N$-Disjoint-Intermezzo} \\
{\bf Instance.} A finite set $A$, collections $B_1,B_2,\ldots,B_N$ of pairs from $A$, and collections $C_1,C_2,\ldots,C_N$ of triples of distinct elements in $A$ such that, for each $\ell \in\{1,2,\ldots,N\}$, the elements in $B_\ell\cup C_\ell$ are pairwise disjoint. \\
{\bf Question.} Does there exist a total linear ordering on the elements in $A$ such that $$a_i<a_j\textnormal{ for each }(a_i,a_j)\in \bigcup_{1 \leq \ell \leq N} B_\ell,$$ and $$a_i<a_j<a_k \textnormal{ or } a_j<a_k<a_i\textnormal{ for each }(a_i,a_j,a_k)\in \bigcup_{1 \leq \ell \leq N} C_\ell\rm{?}$$\\

Let $I$ be an instance of {\sc $N$-Disjoint-Intermezzo}, and let $\cO$ be an ordering on the elements of $A$ that satisfies the two ordering constraints for each pair and triple in the statement of {\sc $N$-Disjoint-Intermezzo}. We say that $\cO$ is an {\it {\sc $N$-Disjoint-Intermezzo} ordering} for $I$. 

We next show that {\sc $4$-Disjoint-Intermezzo} is NP-complete via reduction from the following restricted version of {\sc 3-SAT}.

\noindent{\sc 2P2N-3-SAT} \\
{\bf Instance.} A set $U$ of variables, and a set $\cC$ of clauses, where each clause is a disjunction of exactly three literals, such that each variable appears negated exactly twice  and unnegated exactly twice in $\cC$. \\
{\bf Question.} Does there exist a truth assignment for $U$ that satisfies each clause in $\cC$?

\noindent Berman et al.~\cite[Theorem 1]{berman03} established NP-completeness for {\sc 2P2N-3-SAT}.

\begin{theorem}\label{t:4-DI}
{\sc 4-Disjoint-Intermezzo} is {\em NP}-complete.
\end{theorem}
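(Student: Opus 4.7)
The plan is to verify membership in NP and then reduce from {\sc 2P2N-3-SAT}. Membership is immediate: given a candidate ordering on $A$, every pair in $\bigcup_\ell B_\ell$ and every triple in $\bigcup_\ell C_\ell$ can be checked against its constraint in constant time, so the whole ordering can be verified in polynomial time.

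For hardness, the guiding observation is that the triple constraint $a_i<a_j<a_k$ or $a_j<a_k<a_i$ is equivalent to saying ``$a_j<a_k$ always, and $a_i$ lies outside the open interval $(a_j,a_k)$''. This gives a natural mechanism for encoding a binary choice per variable, and the parameter $N=4$ in {\sc 4-Disjoint-Intermezzo} matches the fact that each variable in a {\sc 2P2N-3-SAT} instance has exactly four literal occurrences (two positive and two negative). Given an instance $(U,\cC)$ of {\sc 2P2N-3-SAT}, my construction would introduce: for each $u_i\in U$, a truth element together with two anchor elements and a triple that commits $u_i$ to one of two ``sides'' (encoding the truth value); and for each clause $c_j\in\cC$, a clause gadget built from pair and triple constraints that together force at least one literal in $c_j$ to be positioned on the side corresponding to ``true''. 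After labelling the four occurrences of each variable by indices $1,\dots,4$, each clause-side constraint involving the $\ell$-th occurrence of a variable is routed into $B_\ell\cup C_\ell$.

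The delicate design requirement is disjointness within each $B_\ell\cup C_\ell$. To satisfy it, each clause gadget should reference occurrence-specific copies of the relevant variable elements rather than a shared symbol, and these copies should be synchronised with the ``master'' truth element by chains of pair constraints whose links are distributed across the four collections. Variable-side gadgets, involving only the elements of a single variable, can each be placed in a single collection and remain trivially disjoint from those of other variables. The whole construction has size linear in $|U|+|\cC|$.

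The main obstacle I anticipate is the correctness analysis in the ordering-to-assignment direction: given any {\sc 4-Disjoint-Intermezzo} ordering $\cO$, I would need to extract a truth assignment from the relative positions of the truth elements and verify that each clause gadget, through the interaction of its triples and pairs, forces at least one of its literals to be ``true'' under this assignment, no matter how adversarial $\cO$ is. The reverse direction is more straightforward: from a satisfying assignment I would construct an explicit ordering by placing the ``true'' elements in one block and the ``false'' elements in another, interleaving anchors and clause-gadget elements to satisfy all pair and triple constraints simultaneously. Together with the polynomial size of the reduction, this establishes NP-hardness and completes the proof.
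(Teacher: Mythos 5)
Your overall strategy (reduce from {\sc 2P2N-3-SAT}, exploit the fact that each variable has exactly four literal occurrences to route constraints into four disjoint collections) points in the right direction, but what you have written is a plan rather than a proof: the load-bearing objects are never constructed. You do not specify the variable gadget, the clause gadget, or a single concrete pair or triple; consequently neither the disjointness of each $B_\ell\cup C_\ell$ nor the equivalence with satisfiability can be checked. You yourself flag the ordering-to-assignment direction as an ``obstacle I anticipate'' rather than a step you have carried out, and that direction is precisely where such reductions live or die. The ``synchronisation by chains of pair constraints'' idea is also underspecified in a way that matters: a pair $(a,b)$ only forces $a<b$, so propagating a \emph{binary truth value} from a master element to occurrence-specific copies cannot be done by pairs alone --- it is the interaction of the triples with the pairs that encodes the choice, and that interaction is exactly what is missing from your sketch.

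For comparison, the paper sidesteps all of this by not designing a new reduction at all. It takes the Guttmann--Maucher reduction from {\sc 3-SAT} to {\sc Intermezzo} verbatim, applies it to a {\sc 2P2N-3-SAT} instance, and then only has to verify one new fact: that the resulting $B\cup C$ admits a partition into four families $B_\ell\cup C_\ell$ whose members are pairwise disjoint. The set $C$ is already pairwise disjoint (that gives $C_1$), one explicitly listed subfamily of $B$ is disjoint by inspection (that gives $B_2$), and the remaining pairs can be split into $B_3$ and $B_4$ precisely because each variable occurs exactly twice negated and twice unnegated. Correctness of the reduction is then inherited from the published proof, with no new gadget analysis required. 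If you want to salvage your approach, you would need to either write out your gadgets in full and prove both directions, or adopt the paper's shortcut of reusing a known-correct construction and proving only the partition property.
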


\begin{proof}
We show that the construction by Guttmann and Maucher~\cite[Lemma 1]{guttmann06}, that was used to show that {\sc Intermezzo} is NP-complete via reduction from {\sc 3-SAT}, yields an instance of {\sc 4-Disjoint-Intermezzo} if we reduce from {\sc 2P2N-3-SAT}.

Using the same notation as Guttmann and Maucher~\cite[Lemma 1]{guttmann06}, their construction is as follows. Let $I$ be an instance of  {\sc 2P2N-3-SAT} that is given by a set of variables $U = \{u_1,\,\ldots,\,u_n\}$ and a set of clauses
\[
\cC=\{(c_{1,1} \vee c_{1,2} \vee c_{1,3}),\,\ldots,\,(c_{m,1} \vee c_{m,2} \vee c_{m,3})\},
\] 
where each $c_{i,j} \in \{u_1, \bar{u}_1,u_2, \bar{u}_2,\ldots,u_n, \bar{u}_n\}$. Furthermore, for $a,b\in\mathbb{N}$, let $a \oplus b$ denote the number $c \in \{1,\,2,\,3\}$ such that $a + b \equiv c \pmod 3$. We define the following three sets:
\begin{align*}
A &= \{u_{k,l},\,\bar{u}_{k,l} \mid 1 \leq k \leq n \wedge 1 \leq l \leq 3\} \cup {}\\
  &\hphantom{{} = {}}\{c_{i,j}^l \mid 1 \leq i \leq m \wedge 1 \leq j \leq 3 \wedge 1 \leq l \leq 3\},\\[5pt]
B &= \{(u_{k,1},\,\bar{u}_{k,3}),\,(\bar{u}_{k,1},\,u_{k,3}) \mid 1 \leq k \leq n\} \cup {}\\
  &\hphantom{{} = {}}\{(c_{i,j,2},\,c_{i,j}^{1}),\,(c_{i,j}^2,\,c_{i,j,1}) \mid 1 \leq i \leq m \wedge 1 \leq j \leq 3\} \cup {}\\
&\hphantom{{} = {}}\{(c_{i,j \oplus 1}^1,\,c_{i,j}^3) \mid 1 \leq i \leq m \wedge 1 \leq j \leq 3\},\\[5pt]
C &= \{(u_{k,1},\,u_{k,2},\,u_{k,3}),\,(\bar{u}_{k,1},\,\bar{u}_{k,2},\,\bar{u}_{k,3}) \mid 1 \leq k \leq n\} \cup {}\\
  &\hphantom{{} = {}}\{(c_{i,j}^1,\,c_{i,j}^2,\,c_{i,j}^3) \mid 1 \leq i \leq m \wedge 1 \leq j \leq 3\},
\end{align*}
where $c_{i,j,l}$ is an abbreviation of $u_{k,l}$ with $u_k = c_{i,j}$. 
By construction, the elements in $C$ are pairwise-disjoint triples of distinct elements in $A$ and, so, the three sets $A$, $B$, and $C$ form an instance of {\sc Intermezzo}.



Now, we show how the pairs and triples in $B \cup C$ can be partitioned into sets $B_\ell \cup C_\ell$ with $B_\ell \subseteq
B$, $C_\ell \subseteq C$, and $1 \leq \ell \leq 4$ such that the elements in $B_\ell\cup C_\ell$ are pairwise disjoint. Recalling that $C$ is a set of pairwise-disjoint triples, we start by setting $B_1 = \emptyset$ and $C_1 = C$. Furthermore, we set 
\begin{align*}
B_2 &= \{(u_{k,1},\,\bar{u}_{k,3}),\,(\bar{u}_{k,1},\,u_{k,3}) \mid 1 \leq k \leq n\} \cup {}\\
&\hphantom{{} = {}}\{(c_{i,j \oplus 1}^1,\,c_{i,j}^3) \mid 1 \leq i \leq m \wedge 1 \leq j \leq 3\}
\end{align*}
and $C_2 = \emptyset$. By construction, it is easy to check that the pairs in $B_2$ are pairwise disjoint. Lastly, consider the remaining pairs
\[
B \setminus B_2 = \{(c_{i,j,2}, c_{i,j}^1), (c_{i,j}^2, c_{i,j,1}) \mid 1 \leq i \leq m \wedge 1 \leq j \leq 3\}
\]
and observe that the only possibility for two pairs in $B \setminus B_2$ to have a non-empty intersection is to have an element $c_{i,j,l}$ with $l\in\{1,2\}$ in common. Now, since each $c_{i,j,l}$ is equal to an element in  $$U'=\{u_{k,l}, \bar{u}_{k,l}\mid 1 \leq k \leq n\wedge 1 \leq l \leq 3\},$$ and each element $u_k$ appears exactly twice negated and twice unnegated in $\cC$, it follows that there is a partition of $B\setminus B_2$ into $B_3$ and $B_4$ so that all pairs in the resulting two sets are pairwise disjoint.
Setting $C_3 = C_4 = \emptyset$ completes the construction of an instance of {\sc 4-Disjoint-Intermezzo}. Noting that it is straightforward to compute the  partition $$B \cup C = \bigcup_{1 \leq \ell \leq 4} \left ( B_\ell \cup C_\ell \right )$$ in polynomial time and that we did not modify the construction described by Guttmann and Maucher~\cite[Lemma 1]{guttmann06} itself, it follows from the same proof that $I$ has a satisfying truth assignment if and only if $\bigcup_{1 \leq \ell \leq 4} \left ( B_\ell \cup C_\ell \right )$ has a {\sc 4-Disjoint-Intermezzo} ordering. \qed
\end{proof}


\noindent {\bf Remark.} By the construction of an instance of {\sc 4-Disjoint-Intermezzo} in the proof of Theorem~\ref{t:4-DI}, we note that no pair or triple occurs twice and that, for each $\ell\in\{1,2,3,4\}$, we have $B_\ell\cup C_\ell\ne\emptyset$. We will freely use these facts throughout the remainder of the paper.

\section{Hardness of {\sc CPS-Existence}}\label{sec:cps}

In this section, we show that the decision problem {\sc CPS-Existence} is NP-complete for any collection of rooted binary phylogenetic trees on the same leaf set that consists of a constant number $m$ of trees with $m\geq 8$. To establish the result, we use a reduction from {\sc 4-Disjoint-Intermezzo}.

\begin{figure}
\centering
\includegraphics[width=.35\textwidth]{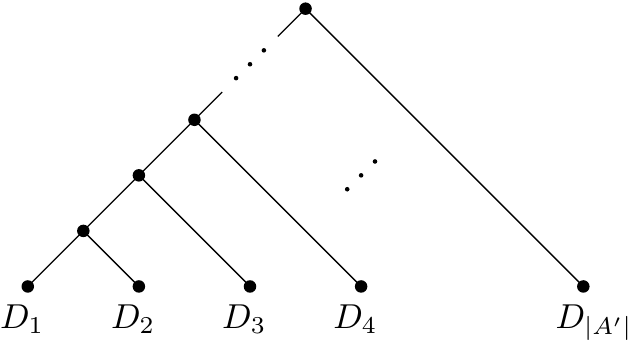}
\caption{A  caterpillar on $|A'|$ leaves and with cherry $\{D_1,D_2\}$.}
\label{fig:caterpillar}
\end{figure}

Let $I$ be an instance of {\sc 4-Disjoint-Intermezzo}. Using the same notation as in the definition of {\sc $N$-Disjoint-Intermezzo}, let $$A'=A\cup\left\{c_r^1,c_r^2,c_r^3,c_r^4\mid c_r\in\bigcup_{1 \leq \ell \leq 4} C_\ell\right\},$$ and let $D=\{d_1,d_2,\ldots,d_{|A'|}\}$. 
For each $\ell\in\{1,2,3,4\}$, we next construct two rooted binary phylogenetic trees. Let $A_\ell$ be the subset of $A'$ that precisely contains each element of $A'$ that is neither contained in an element of $B_\ell$ nor contained in an element of $$C_\ell\cup \{c_r^1,c_r^2,\ldots,c_r^4 \mid c_r\in C_\ell\}.$$
Furthermore, let $\cS_\ell$ and $\cS_\ell'$ both be the caterpillar shown in Figure~\ref{fig:caterpillar}.
Setting $q=1$, let $\cT_\ell$ and $\cT_\ell'$  be the two rooted binary phylogenetic trees obtained from $\cS_\ell$ and $\cS_\ell'$ that result from the following four-step process.
\begin{enumerate}[(i)]
\item For each $(a_i,a_j)\in B_\ell$ in turn, replace the leaf $D_q$ in $\cS_\ell$ (resp. $\cS_\ell'$) with the 3-taxon tree on the top left (resp. bottom left) in Figure~\ref{fig:gadgets} and increment $q$ by one.
\item For each $c_r\in C_\ell$ with $c_r=(a_i,a_j,a_k)$ in turn, replace the leaf $D_q$ in $\cS_\ell$ (resp. $\cS_\ell'$) with the 8-taxon tree on the top right (resp. bottom right) in Figure~\ref{fig:gadgets} and increment $q$ by one.
\item For each $a_i\in A_\ell$ in turn, replace the leaf $D_q$ in $\cS_\ell$ and $\cS_\ell'$ with the cherry $\{a_i,d_q\}$ and increment $q$ by one.
\item For each element in $\{q,q+1,\ldots,|A'|\}$, replace the leaf label $D_q$ in $\cS_\ell$ and $\cS_\ell'$ with $d_q$. 
\end{enumerate}

\begin{figure}
\centering
\includegraphics[width=.6\textwidth]{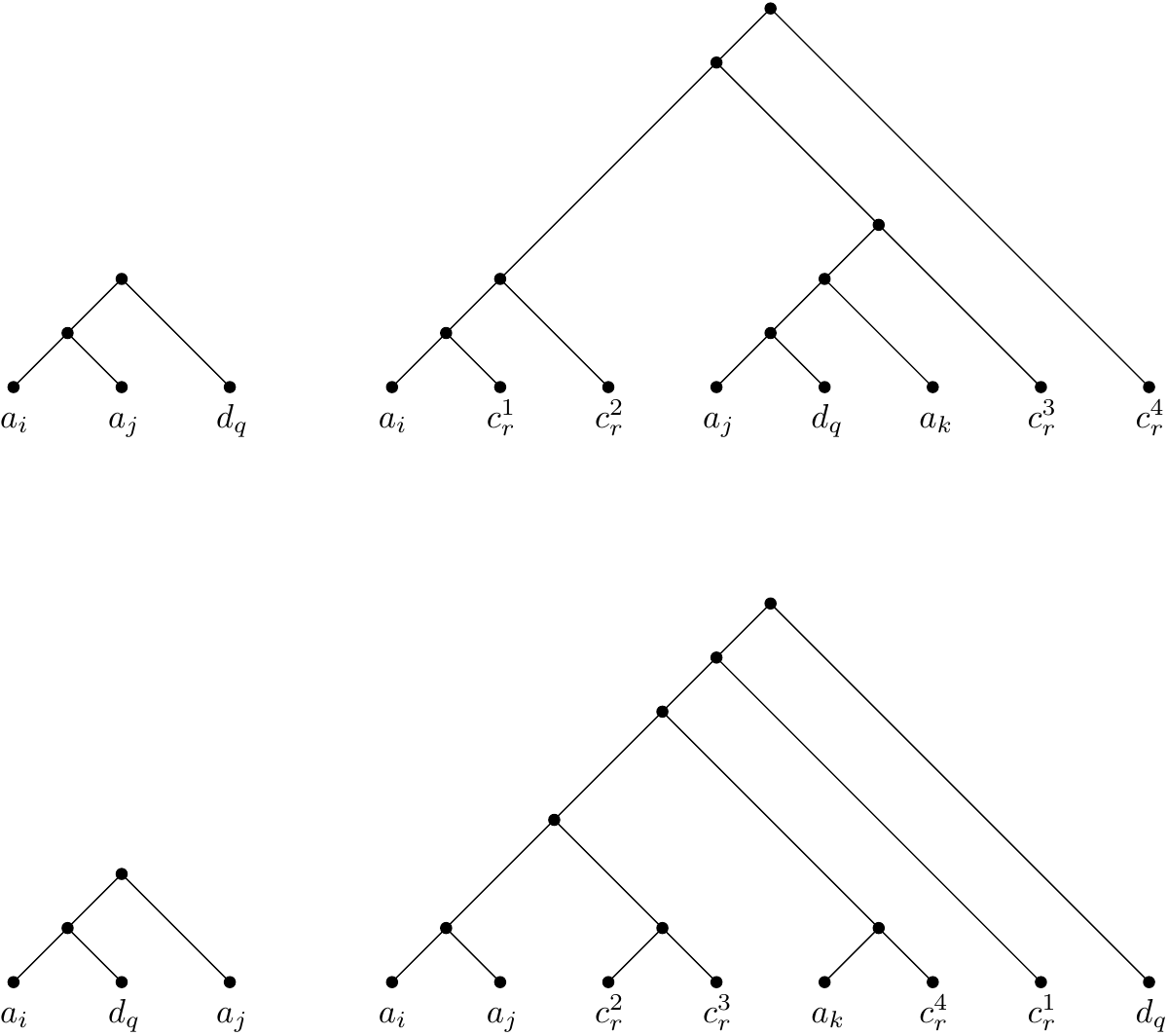}
\caption{Gadgets for a pair $(a_i,a_j)$ (left) and gadgets for a triple $(a_i,a_j,a_k)$ (right) that are used in the reduction from {\sc 4-Disjoint-Intermezzo} to {\sc CPS-Existence}.}
\label{fig:gadgets}
\end{figure}

\noindent We call $\cP_I=\{\cT_\ell,\cT_\ell' \mid 1 \leq \ell \leq 4\}$ the set  of {\it intermezzo trees} associated with $I$. The next observation is an immediate consequence from the above construction and the fact that, for each $1 \leq \ell \leq 4$, the elements in $B_\ell$ and $C_\ell$ are pairwise disjoint.

\begin{observation}
For an instance $I$ of {\sc 4-Disjoint-Intermezzo}, the set of intermezzo trees associated with $I$ consists of eight pairwise non-isomorphic rooted binary phylogenetic trees whose set of leaves is $A'\cup D$.
\end{observation}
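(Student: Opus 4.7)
The observation has two substantive claims: every tree in $\cP_I$ has leaf set exactly $A'\cup D$, and the eight trees are pairwise non-isomorphic. I would verify these in turn.

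For the leaf-set claim, the plan is a direct count through the four substitution steps that turn $\cS_\ell$ into $\cT_\ell$ (the argument for $\cT_\ell'$ is identical). Step (i) replaces one $D_q$ by a 3-leaf gadget introducing $a_i,a_j,d_q$; step (ii) replaces one $D_q$ by an 8-leaf gadget introducing $a_i,a_j,a_k,c_r^1,\ldots,c_r^4,d_q$; step (iii) replaces one $D_q$ by the cherry $\{a_i,d_q\}$; and step (iv) relabels each remaining $D_q$ as $d_q$. Because the elements in $B_\ell\cup C_\ell$ are pairwise disjoint, the new $A'$-labels introduced in steps (i) and (ii) are all distinct; because $A_\ell$ is by definition the complement in $A'$ of the labels occurring in those pairs and triples (together with the accompanying $c_r^l$), step (iii) introduces precisely the remaining $A'$-labels; and steps (i)--(iv) together introduce exactly one $d_q$ for each $q\in\{1,\ldots,|A'|\}$. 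Summing the contributions yields a leaf set of size $2|A'|=|A'\cup D|$, and a routine check on the four label classes (elements of $A$ in a pair, elements of $A$ in a triple, the $c_r^l$-labels, and the $d_q$-labels) confirms that each element of $A'\cup D$ is attained exactly once.

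For pairwise non-isomorphism, I would use that isomorphism is defined via the identity map on $X$, so it suffices to exhibit, for each pair of distinct trees, a structural feature (a cluster or a cherry) on which they disagree. I would split the $\binom{8}{2}=28$ comparisons into two cases. In the case $\cT_\ell$ versus $\cT_\ell'$, the remark after Theorem~\ref{t:4-DI} gives $B_\ell\cup C_\ell\ne\emptyset$, so at least one gadget is inserted; since the top and bottom variants in Figure~\ref{fig:gadgets} are non-isomorphic rooted binary phylogenetic trees on the same leaf set, the two trees differ inside that gadget, exhibiting a cherry that appears in one but not the other. In the case of distinct indices $\ell\ne k$, the partition property of $\{B_\ell\cup C_\ell\}$ and the remark together give a pair or triple in $B_\ell\cup C_\ell$ that is absent from $B_k\cup C_k$; on the $\ell$-side, the three (or eight) leaves of this pair (or triple) form a cluster induced by the corresponding gadget, whereas on the $k$-side the same labels are scattered across other gadgets or across cherries of the form $\{x,d_{q'}\}$ for $x\in A_k$.

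The main obstacle is a subtlety: a single $A$-label can appear inside gadgets of both $\cT_\ell$ and $\cT_k$ (in different pairs or triples), so single-leaf comparisons are not enough and the correct invariant is the full leaf set of a gadget together with the cherry it contains. The $c_r^l$-labels make this sharp: since each triple $c_r$ lies in exactly one $C_{\ell'}$ and the $c_r^l$ enter $A'$ only on account of $c_r$, the eight-element leaf set of any triple gadget is a cluster in precisely the two trees $\cT_{\ell'}$ and $\cT_{\ell'}'$. Combining this with the analogous pair-cluster observation and with the gadget-shape argument for $\cT_\ell$ versus $\cT_\ell'$ closes all 28 comparisons.
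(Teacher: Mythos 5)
Your verification is correct, and it takes the same route the paper intends: the paper offers no explicit proof, declaring the observation an immediate consequence of the construction together with the pairwise disjointness of the elements in each $B_\ell\cup C_\ell$, which is exactly the fact your leaf-count and your cluster/cherry arguments rest on (supplemented by the Remark after Theorem~\ref{t:4-DI} that each $B_\ell\cup C_\ell\ne\emptyset$ and no pair or triple repeats). Your extra care on the cross-index comparisons via gadget leaf sets as clusters, and your (necessary) appeal to the two gadget variants in Figure~\ref{fig:gadgets} being distinct labeled trees, fill in details the paper leaves implicit but introduce nothing beyond its stated justification.
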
 


We now establish the main result of this section.

\begin{theorem}\label{lem:CPS-Existence}
Let $\cP=\{\cT_1,\cT_2,\ldots,\cT_m\}$ be a collection of rooted binary phylogenetic $X$-trees. {\sc CPS-Existence} is {\rm NP}-complete for $m=8$.
\end{theorem}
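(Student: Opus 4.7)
My plan is to reduce from {\sc 4-Disjoint-Intermezzo}, which is NP-complete by Theorem~\ref{t:4-DI}. The collection $\cP_I$ of exactly eight intermezzo trees associated with an instance $I$ has already been constructed above in polynomial time in $|I|$, so it suffices to prove that $I$ admits a {\sc 4-Disjoint-Intermezzo} ordering if and only if $\cP_I$ admits a cherry-picking sequence. Membership of {\sc CPS-Existence} in NP is immediate: a candidate ordering on $X$ is verified in polynomial time by scanning it from left to right and checking at each step that the next element labels a cherry leaf in every tree that remains.

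The backbone of the argument will be two gadget lemmas. First, for every pair $(a_i,a_j)\in B_\ell$, any cherry-picking sequence for $\cP_I$ must pick $a_i$ strictly before $a_j$. I will prove this by a short case analysis on the 3-taxon pair gadget of Figure~\ref{fig:gadgets}, tracking which of the leaves $a_i,a_j,d_q$ can simultaneously lie in a cherry of both $\cT_\ell$ and $\cT_\ell'$ and using that, by the pairwise disjointness of $B_\ell\cup C_\ell$, the elements $a_i$ and $a_j$ occur in $\cT_\ell$ and $\cT_\ell'$ only inside this particular gadget while $d_q$ is unique to it. Second, for every triple $(a_i,a_j,a_k)\in C_\ell$, any cherry-picking sequence for $\cP_I$ must order $a_i,a_j,a_k$ either as $a_i<a_j<a_k$ or as $a_j<a_k<a_i$; the four auxiliary leaves $c_r^1,c_r^2,c_r^3,c_r^4$ together with $d_q$ in the 8-taxon triple gadget are placed precisely so that, as the cherries of $\cT_\ell$ and $\cT_\ell'$ evolve under successive removals, any other relative order of $a_i,a_j,a_k$ is blocked in at least one of the two trees. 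Establishing this second lemma is where I expect the main obstacle to lie, since the intermezzo condition is not a simple pairwise constraint and one must rule out, across $\cT_\ell$ and $\cT_\ell'$ jointly, every relative order of $a_i,a_j,a_k$ outside the two allowed patterns; this requires a careful book-keeping of how the cherries shift after each removal in both trees.

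Once the gadget lemmas are in hand, the backward direction of the equivalence is immediate: restricting any cherry-picking sequence for $\cP_I$ to $A$ yields a {\sc 4-Disjoint-Intermezzo} ordering for $I$. For the forward direction, given an ordering $\cO$ on $A$ satisfying all pair and triple constraints, I will build a cherry-picking sequence for $\cP_I$ explicitly. The construction traverses each of the eight caterpillar backbones from the cherry end $\{D_1,D_2\}$ toward the root, and within each pendant gadget replacing some $D_q$ it picks leaves in the unique order compatible with $\cO$ and the internal gadget structure, consuming the auxiliary $c_r^i$ and $d_q$ leaves along the way. Verifying that the resulting sequence is a cherry-picking sequence then reduces to checking that distinct gadgets and distinct backbones do not interfere, which is routine because every $d_q$ and every $c_r^i$ is unique to a single gadget and because, by the disjointness property of {\sc 4-Disjoint-Intermezzo}, the $A$-elements appearing in the pair and triple gadgets of each $\cT_\ell$ are pairwise disjoint.
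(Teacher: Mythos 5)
Your overall route is the same as the paper's: membership in NP by direct verification, reduction from {\sc 4-Disjoint-Intermezzo} via the intermezzo trees $\cP_I$, the pair gadget forcing $a_i<a_j$, and the triple gadget forcing $a_i<a_j<a_k$ or $a_j<a_k<a_i$. The direction from a cherry-picking sequence to an intermezzo ordering is essentially identical to the paper's argument (both gadget properties are read off the pendant subtrees with leaf sets $\{a_i,a_j,d_q\}$ and $\{a_i,a_j,a_k,c_r^1,\ldots,c_r^4,d_q\}$), and your plan to verify the triple-gadget property by case analysis is fine in principle.

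There is, however, a genuine gap in your converse direction, namely in how the single global sequence is assembled. You propose to traverse the caterpillar backbones gadget by gadget, ``consuming the auxiliary $c_r^i$ and $d_q$ leaves along the way,'' and you justify non-interference by asserting that every $d_q$ is unique to a single gadget. That assertion is false across the eight trees: each $d_q$ is a leaf of \emph{all} eight trees, and while it sits at backbone position $q$ in each of them, the gadget it belongs to at that position has different leaf content for different $\ell$ (e.g.\ $d_q$ may lie in a triple gadget of $\cT_1,\cT_1'$ but in a pair gadget of $\cT_2,\cT_2'$ involving entirely different elements of $A$). Consequently, picking $d_q$ ``along the way'' while processing a gadget of one tree can occur at a moment when $d_q$ is not yet a cherry leaf in another tree (for instance when, in that other tree, $d_q$ sits above the cherry of its gadget and neither companion has been removed); likewise, following the backbone order of one tree can force elements of $A$ out of the order $\cO$, breaking constraints enforced by other trees. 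The paper avoids exactly this cross-tree interference by a different assembly: all elements of $A$ are picked first in the order $\cO'$, with the $c_r^i$ inserted into fixed slots determined by whether $a_i<a_j<a_k$ or $a_j<a_k<a_i$, and the entire block $(d_1,d_2,\ldots,d_{|A'|})$ is appended as a suffix; the suffix works because $\cS|D$ is the same caterpillar for every $\cS\in\cP_I$. Your construction needs this (or an equivalent) global ordering argument; as stated, the ``routine'' non-interference check does not go through.
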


\begin{proof}
Clearly,  {\sc CPS-Existence} for $m=8$ is in NP because, given an ordering $\cO$ on the elements in $X$, we can decide in polynomial time if $\cO$ is a cherry-picking sequence for $\cP$. Let $I$ be an instance of {\sc 4-Disjoint-Intermezzo}, and let $\cP_I=\{\cT_\ell,\cT_\ell'\mid 1 \leq \ell \leq 4\}$ be the set of eight intermezzo trees that are associated with $I$. Note that each  tree in $\cP_I$ can be constructed in polynomial time and has a size that is polynomial in $|A|$. The remainder of the proof essentially consists of establishing the following claim.


\noindent{\bf Claim.} $I$ is a `yes'-instance of {\sc 4-Disjoint-Intermezzo} if and only if $\cP_I$ has a cherry-picking sequence.

First, suppose that $\cP_I$ has a cherry-picking sequence. Let $\cO$ be a cherry-picking sequence for $\cP_I$, and let $\cO'$ be the subsequence of $\cO$ of length $|A|$ that contains each element in $A$. We next show that $\cO'$ is a {\sc $4$-Disjoint-Intermezzo} ordering for $I$. Let $(a_i,a_j)$ be an element of some $B_\ell$ with $1 \leq \ell \leq 4$, and let $d_q$, with $q\in\{1,2,\ldots,|A'|\}$, be the unique leaf label of $\cT_\ell$ and $\cT_\ell'$ such that $\{a_i,a_j,d_q\}$ is the leaf set of a pendant subtree of $\cT_\ell$ and $\cT_\ell'$. By construction of $\cT_\ell$ and $\cT_\ell'$, it is easily seen that $d_q$ exists and $a_i<a_j$ in $\cO$.
Hence, $a_i<a_j$  in $\cO'$. Turning to the triples, let $c_r=(a_i,a_j,a_k)$ be an element of some $C_\ell$ with $1 \leq \ell \leq 4$, and let $d_q$, with $q\in\{1,2,\ldots,|A'|\}$, be the unique leaf label of $\cT_\ell$ and $\cT_\ell'$ such that $\{a_i,a_j,a_k,c_r^1,c_r^2,c_r^3,c_r^4,d_q\}$ is the leaf set of a pendant subtree of $\cT_\ell$ and $\cT_\ell'$. Again, by construction, $d_q$ exists. Let $\cS_\ell=\cT_\ell|\{a_i,a_j,a_k,c_r^1,c_r^2,c_r^3,c_r^4,d_q\}$ and, similarly, let $\cS_\ell'=\cT_\ell'|\{a_i,a_j,a_k,c_r^1,c_r^2,c_r^3,c_r^4,d_q\}$. It is straightforward to check that each cherry-picking sequence for $\cS_\ell$ and $\cS_\ell'$  satisfies either
$$a_i<a_j<a_k,\textnormal{ or }a_j<a_k<a_i.$$ 
Hence, as $\cS_\ell$ and $\cS_\ell'$ are pendant in $\cT_\ell$ and $\cT_\ell'$, respectively, we have $a_i<a_j<a_k$, or $a_j<a_k<a_i$ in $\cO$ and, consequently, in $\cO'$. Since the above argument holds for  each pair and each triple, it  follows that $\cO'$ is a {\sc 4-Disjoint-Intermezzo} ordering for $I$ and, so, $I$ is a `yes'-instance.

Conversely, suppose that $I$ is a `yes'-instance of {\sc 4-Disjoint-Intermezzo}. Let $\cO'$ be a {\sc 4-Disjoint-Intermezzo} ordering on the elements of $A$.  To ease reading, let $$C=\bigcup_{1 \leq \ell \leq 4}C_\ell.$$  Modify $\cO'$ as follows to obtain an ordering $\cO$.
\begin{enumerate}[(1)]
\item Concatenate $\cO'$ with the sequence $(d_1,d_2,\ldots,d_{|A'|})$.
\item For each $c_r=(a_i,a_j,a_k)$ in $C$, do one of the following two depending on the order of $a_i$, $a_j$, and $a_k$ in $\cO'$. If $a_i<a_j<a_k$ in $\cO'$, then replace $a_i$ with $a_i,c_r^2$ and replace $a_k$ with $a_k,c_r^3,c_r^1,c_r^4$. Otherwise, if $a_j<a_k<a_i$, replace $a_k$ with $a_k,c_r^3$ and replace $a_i$ with $a_i,c_r^2,c_r^1,c_r^4$.
\end{enumerate}
Since $\cO'$ is a {\sc 4-Disjoint-Intermezzo} ordering  with $a_i<a_j<a_k$ or $a_j<a_k<a_i$ for each $(a_i,a_j,a_k)\in C$, it follows from the construction of $\cO$ from $\cO'$ that $\cO$ is an ordering on the elements in $A'\cup D$. It remains to show that $\cO$ is a cherry-picking sequence for $\cP_I$. First, consider a  pendant subtree with leaf set $\{a_i,a_j,d_q\}$ in $\cT_\ell$ and $\cT_\ell'$ for some $1 \leq \ell \leq 4$. By construction, $(a_i,a_j)$ is a pair in $B_\ell$ and, so, we have $a_i<a_j$ in $\cO'$ and $a_i<a_j<d_q$ in $\cO$. Second, consider a  pendant subtree with leaf set  $\{a_i,a_j,a_k,c_r^1,c_r^2,c_r^3,c_r^4,d_q\}$ in $\cT_\ell$ and $\cT_\ell'$ for some $1 \leq \ell \leq 4$. By construction, $(a_i,a_j,a_k)$ is a triple in $C_\ell$ and, so, we have either $a_i<a_j<a_k$ in $\cO'$ and $$a_i<c_r^2<a_j<a_k<c_r^3<c_r^1<c_r^4<d_q$$ in $\cO$, or $a_j<a_k<a_i$ in $\cO'$ and $$a_j<a_k<c_r^3<a_i<c_r^2<c_r^1<c_r^4<d_q$$ in $\cO$. Third, consider a  pendant subtree with leaf set $\{a_i,d_q\}$ in $\cT_\ell$ and $\cT_\ell'$ for some $1 \leq \ell \leq 4$. By construction, we have $a_i<d_q$ in $\cO$. Fourth, if $(a_i,a_j,a_k)\in C$, then, as $I$ has a  {\sc 4-Disjoint-Intermezzo} ordering, there does not exist a pair $(a_k,a_j)$ in $B_\ell$ for some $1 \leq \ell \leq 4$. Lastly, observe that $(d_1,d_2,\ldots,d_{|A'|})$ is a suffix of $\cO$ and that, for any two trees, say $\cS$ and $\cS'$ in $\cP_I$, we have that $\cS|D$ and $\cS'|D$ are isomorphic. Since $\cO'$ is a {\sc 4-Disjoint-Intermezzo} ordering,  it is now straightforward to check that $\cO$ is a cherry-picking sequence of $\cP_I$. This establishes the proof of the claim and, thereby, the theorem.\qed
\end{proof}

The next corollary shows that {\sc CPS-Existence} is not only NP-complete for a collection of eight rooted binary phylogenetic trees on the same leaf set, but for any such collection with a fixed number $m$ of trees with $m\geq 8$.

\begin{corollary}
Let $\cP=\{\cT_1,\cT_2,\ldots,\cT_m\}$ be a collection of rooted binary phylogenetic $X$-trees. {\sc CPS-Existence} is {\rm NP}-complete for any fixed $m$ with $m\geq 8$.
\end{corollary}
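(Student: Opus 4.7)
The plan is to reduce from the case $m=8$ established in Theorem~\ref{lem:CPS-Existence} by a trivial padding argument. Membership in NP is immediate by the same verification procedure used in Theorem~\ref{lem:CPS-Existence}: given a candidate ordering on $X$, we can simulate the leaf-pruning process in polynomial time and check at each step that the specified leaf labels a cherry in every remaining tree.

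For hardness at a fixed $m \geq 9$, let $\cP_0 = \{\cT_1, \ldots, \cT_8\}$ be an arbitrary instance of {\sc CPS-Existence} on leaf set $X$ produced by the construction in the proof of Theorem~\ref{lem:CPS-Existence}. I would form $\cP$ from $\cP_0$ by adjoining $m-8$ further copies of the tree $\cT_1$, yielding a collection of $m$ rooted binary phylogenetic $X$-trees in polynomial time. The definition of a cherry-picking sequence requires that, at each step $i$, the chosen leaf label a cherry in every tree of $\cP[-x_1, \ldots, x_{i-1}]$; since every tree appearing in the padded collection already appears in $\cP_0$, the two instances admit exactly the same cherry-picking sequences, and hence $\cP$ is a yes-instance if and only if $\cP_0$ is. The only conceptual caveat is the tacit convention that a \emph{collection} of trees is treated as a multiset rather than a set, which is standard in the phylogenetics literature and consistent with the paper's formulation; with this reading, no real difficulty remains beyond the already-established hardness for $m=8$.
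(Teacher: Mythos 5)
Your NP-membership argument is fine, but the hardness half has a genuine gap, and it is exactly the one you flag yourself and then wave away. The paper treats $\cP$ as a \emph{set} of trees: it writes $\cP=\{\cT_1,\ldots,\cT_m\}$, and its Observation explicitly records that the reduction for $m=8$ produces eight \emph{pairwise non-isomorphic} trees. Under that reading, adjoining $m-8$ further copies of $\cT_1$ does not produce an instance with $m$ trees at all --- the padded collection collapses back to the original eight distinct trees --- so your reduction does not land in the class of instances the corollary is about. The strongest evidence that the multiset reading is not the intended one is the corollary itself: if duplicates were permitted, the statement would be an utterly trivial consequence of Theorem~\ref{lem:CPS-Existence} and would not warrant a separate proof, whereas the paper devotes a careful argument to it.

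The paper's actual route is to go back to the {\sc 4-Disjoint-Intermezzo} instance and generate the extra trees there: it picks a collection $B_i\cup C_i$ with $|B_i\cup C_i|>t$ (handling the complementary case by noting the instance then has constant size and is polynomial-time solvable), refines it into $\frac{t}{2}+1$ (or $\frac{t-1}{2}+1$ plus an empty collection, when $t$ is odd) sub-collections, and builds two intermezzo trees per sub-collection; because each sub-collection involves a different set of gadgets, the resulting $t+8$ trees are pairwise non-isomorphic, and the equivalence with the Intermezzo ordering goes through unchanged. Note also that you cannot easily repair your argument by padding with some \emph{distinct} tree on $X$: any new tree imposes additional cherry-picking constraints and could change the answer, which is precisely why the paper manufactures the extra trees from a repartition of the same constraint sets rather than from thin air. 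If you want to keep a padding-style argument, you must either prove that the problem is formulated over multisets (it is not, as stated) or supply distinct padding trees together with a proof that they preserve the yes/no answer.
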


\begin{proof}
Clearly,  {\sc CPS-Existence} for $m= t+8$ with $t \geq 0$ is in NP. To establish the corollary, we show how one can modify the reduction that is described prior to Theorem~\ref{lem:CPS-Existence} to obtain a set of $t+8$ rooted binary phylogenetic trees from an instance of {\sc 4-Disjoint-Intermezzo}. 

Let $I$ be an instance of {\sc 4-Disjoint-Intermezzo}. Throughout the remainder of the proof, we assume that there exists an $1 \leq \ell \leq 4$ such that $|B_\ell \cup C_\ell| > t$. Otherwise, since $t = m - 8$ and $m$ is fixed, it follows that $I$ has a constant  number $c$ of pairs and triples with $c\leq 4(m-8)$ and is solvable in polynomial time. 

Now, let $B_i$ and $C_i$ with  $i \in \{1,\,2,\,3,\,4\}$ be a collection of pairs and triples, respectively, such that $|B_i \cup C_i| > t$. Theorem~\ref{lem:CPS-Existence} establishes the result for when $t=0$. We may therefore assume that $t>0$ and consider two cases. First, suppose that $t$ is even. Replace $B_i$ and $C_i$ in $I$ with a partition of  $B_i \cup C_i$ into $\frac{t}{2} + 1$ sets. Each of the resulting new sets can be split naturally into a collection of pairs and a collection of triples of which at most one is empty. This results in $$\left (\frac t 2+1\right )+(4-1)=\frac t 2 +4$$ collections of pairs and triples, respectively. Now, for each $B_\ell$ and $C_\ell$ with $\ell\in\{1,2,\ldots,\frac t 2+4\}$, construct two rooted binary phylogenetic trees as described in the definition of the set of intermezzo trees associated with $I$.  This yields $$2\left (\frac {t} 2 +4\right )=t+8=m$$ pairwise non-isomorphic trees. Second, suppose that $t$ is odd. Replace $B_i$ and $C_i$ in $I$ with a partition of  $B_i \cup C_i$ into $\frac{t-1}{2} + 1$ sets. 
Additionally, add $B_*=\emptyset$ and $C_*=\emptyset$. Analogous to the first case, this results in $$\left (\frac {t-1} 2+1+1\right )+(4-1)=\frac {t-1} 2 +5$$
collections of pairs and triples, respectively. Again, for each $B_\ell$ and $C_\ell$ with $\ell\in\{1,2,\ldots,\frac {t-1} 2+5\}$ construct two rooted binary phylogenetic trees as described in the definition of the set of intermezzo trees associated with $I$. Noting that the two trees for $B_*$ and $C_*$ are isomorphic, it follows that the construction yields $$2\left (\frac {t-1} 2 +5\right )-1=\frac{2t-2}2+10-1=t+8=m$$ pairwise non-isomorphic trees. Since the proof of Theorem~\ref{lem:CPS-Existence} generalizes to a set of $m$ intermezzo trees, the corollary now follows for both cases.\qed
\end{proof}

\section{Bounding the number of cherries}\label{sec:cherries}

The main result of this section is the following theorem.


\begin{theorem}\label{t:cherry}
Let $\cP=\{\cT_1,\cT_2,\ldots,\cT_m\}$ be a collection of rooted binary phylogenetic $X$-trees. Let $c$ be the maximum element in $\{c_{\cT_1},c_{\cT_2},\ldots,c_{\cT_m}\}$. Then solving {\sc CPS-Existence} for $\cP$ takes time $$O\left (|X|^{m(4c-2)+1} + \sum_{i=1}^m f_i(|X|, c_{\cT_i})\right ),$$ where $f_i(|X|, c_{\cT_i}) \in |X|^{O(c_{\cT_i})}$. In particular, the running time is polynomial in $|X|$ if $c$ and $m$ are constant.
\end{theorem}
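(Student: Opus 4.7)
Building on the automata framework sketched in Section~\ref{sec:prelim}, I would construct, for each tree $\cT_i \in \cP$, a deterministic finite automaton $\cA_i$ over the alphabet $\Sigma = X$ whose language is exactly the set of cherry-picking sequences for $\{\cT_i\}$. The states of $\cA_i$ are the restrictions $\cT_i | X'$ that arise after some valid cherry-picking prefix has been picked, plus a sink for invalid continuations; on input $x$, the transition from state $\cT_i|X'$ goes to $\cT_i|(X' \setminus \{x\})$ when $x$ labels a leaf of a cherry of $\cT_i|X'$, and to the sink otherwise. The product automaton $\cA = \cA_1 \times \cdots \times \cA_m$ then recognizes $\bigcap_i \cL(\cA_i)$, which is precisely the set of cherry-picking sequences for $\cP$, so {\sc CPS-Existence} reduces to deciding whether $\cL(\cA)$ is non-empty, a standard reachability question on $\cA$.

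\textbf{Bounding the number of states.} The crux of the proof is to show that the reachable part of each $\cA_i$ contains at most $|X|^{4 c_{\cT_i} - 2}$ states. I would exploit two structural features of cherry-picking. First, removing a cherry leaf either leaves the total cherry count unchanged (when the removed leaf's sibling gets a new leaf neighbor) or decreases it by one (otherwise); hence every reachable subtree of $\cT_i$ has at most $c_{\cT_i}$ cherries. Second, within each pendant chain that connects consecutive cherries in the skeleton of $\cT_i$, pendants can only enter the pool of candidates for picking ``from the bottom up'', and a short induction on the number of picks shows that the remaining portion of any chain in any reachable subtree is a contiguous suffix of the corresponding original chain. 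These two facts let me encode a reachable state by the at most $2 c_{\cT_i}$ labels currently forming cherries together with one pointer per skeleton edge (of which there are at most $2 c_{\cT_i} - 2$) marking the current bottom of the remaining chain. Since each of these $4 c_{\cT_i} - 2$ coordinates takes at most $|X|$ values, the total number of reachable states is bounded by $|X|^{4 c_{\cT_i} - 2}$.

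\textbf{Running time and principal obstacle.} Once the per-tree bound is available, the product automaton $\cA$ has at most $\prod_i |X|^{4 c_{\cT_i} - 2} \le |X|^{m(4c-2)}$ reachable states, each with at most $|X|$ outgoing transitions; a forward search from the initial state discovers a final state (if one exists) in time $O(|X|^{m(4c-2)+1})$. Each individual automaton $\cA_i$ can be built by simulating cherry-picking on $\cT_i$ starting from the initial state and enumerating reachable states and transitions, which takes $f_i(|X|, c_{\cT_i}) \in |X|^{O(c_{\cT_i})}$ time; summing the construction costs gives the additive $\sum_i f_i$ term. I expect the main difficulty to be a rigorous proof of the state-count bound---in particular, making the ``suffix'' invariant for chains precise and handling the configurations in which picking a cherry leaf destroys a cherry without creating a new one, so that the skeleton topology genuinely changes. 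Once that structural analysis is in place, the product construction and the reachability-based emptiness test are standard automata-theoretic bookkeeping.
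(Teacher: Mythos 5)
Your proposal follows essentially the same route as the paper: a per-tree DFA whose states are the cherry-picked subtrees (the paper's set $\cC(\cT_i)$) plus a sink, a product automaton, and a reachability test for emptiness, with the per-tree state bound $O(|X|^{4c_{\cT_i}-2})$ derived from exactly the two structural facts you identify (deleting a cherry leaf never increases the cherry count, and each chain of the original tree survives only as a contiguous suffix). The paper makes your ``skeleton plus suffix pointers'' encoding precise via an index tree on $2c_{\cT_i}-1$ vertices and a vector representation whose components each record a head label and a suffix of the original caterpillar chain, yielding $\bigl((|X|+1)^2\bigr)^{2c_{\cT_i}-1}=(|X|+1)^{4c_{\cT_i}-2}$ --- a slightly different bookkeeping of the same $4c_{\cT_i}-2$ coordinates than your $2c_{\cT_i}$ cherry labels plus $2c_{\cT_i}-2$ pointers, but the same underlying argument.
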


Let $\cT$ be a rooted binary phylogenetic $X$-tree. We denote by $\cC(\cT)$ the recursively defined set of trees that contains $\cT$ and $\emptyset$, and that satisfies the following property.

\noindent {\bf (P)} If a tree $\cT'$ is in $\cC(\cT)$ and $\{a,b\}$ is a cherry in $\cT'$, then $\cT'[-a]$ and $\cT'[-b]$ are also contained in $\cC(\cT)$.

\noindent We refer to $\cC(\cT)$ as the  {\it set of cherry-picked trees} of $\cT$. Intuitively, $\cC(\cT)$ contains each tree that can be obtained from $\cT$ by repeatedly deleting a leaf of a cherry.

To establish Theorem~\ref{t:cherry}, we consider the set $\cC(\cT)$ of cherry-picked trees of $\cT$. First, we develop a new vector representation for each tree in $\cC(\cT)$ and show that the size of $\cC(\cT)$ is at most $(|X|+1)^{O(c_\cT)}$.
 We then construct an automaton whose number of states is $|\cC(\cT)|+1$ and that recognizes whether or not a word that contains each element in $X$ precisely once is a cherry-picking sequence for $\cT$. Lastly, we show how to use a product automaton construction to solve {\sc CPS-Existence}  for a set of rooted binary phylogenetic $X$-trees in time that is polynomial if the number of cherries and the number of trees in $\cP$ is bounded by a constant.

We start with a simple lemma, which shows that deleting a leaf of a cherry never increases the number of cherries. 

\begin{lemma}\label{lem:cherries}
Let $\cT$ be a rooted binary phylogenetic $X$-tree, and let $a$ be an element of a cherry in $\cT$. Then, $$c_{\cT}-c_{\cT[-a]}\in\{0,1\}.$$
\end{lemma}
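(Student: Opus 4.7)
The plan is to track exactly which cherries are destroyed and which are created when the leaf $a$ is removed from $\cT$. Let $\{a,b\}$ be the cherry containing $a$, and let $p$ denote the common parent of $a$ and $b$ in $\cT$. The construction of $\cT[-a]$ amounts to deleting $a$ together with the (now pendant) edge $(p,a)$, and then suppressing the resulting degree-$2$ vertex $p$. All cherries of $\cT$ not incident to the local modification remain cherries of $\cT[-a]$, and no leaf outside of $\{a,b\}$ changes its sibling except possibly one, so only cherries involving $a$, $b$, or the sibling of $p$ can change.

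I would then perform a short case analysis on the position of $p$. If $p$ is the root, then $|X|=2$, $\cT$ has exactly one cherry $\{a,b\}$, and $\cT[-a]$ is the single leaf $b$ with $0$ cherries, giving a difference of $1$. Otherwise, let $g$ be the parent of $p$ in $\cT$ and let $v$ be the sibling of $p$ (i.e. the other child of $g$). After deleting $a$ and suppressing $p$, the vertex $b$ becomes a child of $g$ and its new sibling is $v$. Hence the only candidate for a new cherry in $\cT[-a]$ is $\{b,v\}$, which is a cherry if and only if $v$ is a leaf.

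Combining these observations: the cherry $\{a,b\}$ of $\cT$ is destroyed in $\cT[-a]$, and exactly one new cherry $\{b,v\}$ appears precisely when $v$ is a leaf in $\cT$. In that subcase the number of cherries is unchanged, so $c_\cT - c_{\cT[-a]} = 0$; in the complementary subcase (including the root case above) exactly one cherry is lost and none is gained, so $c_\cT - c_{\cT[-a]} = 1$. In both cases the difference lies in $\{0,1\}$, completing the proof.

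There is no real obstacle here; the only thing to be careful about is the boundary case where $p$ is the root (equivalently $|X|=2$), which must be handled separately since then $v$ does not exist, and the degenerate nature of suppressing a root of degree $2$ during the construction of the restriction.
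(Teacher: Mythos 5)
Your proof is correct and follows essentially the same case analysis as the paper: all cherries other than $\{a,b\}$ survive, and the only potential new cherry is $\{b,v\}$ where $v$ is the sibling of the parent of $a$, giving a difference of $0$ or $1$ according to whether $v$ is a leaf. You are in fact slightly more careful than the paper in explicitly treating the boundary case $|X|=2$, where the grandparent of $a$ does not exist.
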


\begin{proof}
Let $b$ be the unique element in $X$ such that $\{a,b\}$ is a cherry in $\cT$. Observe that each cherry of $\cT$ other than $\{a,b\}$ is also a cherry of $\cT[-a]$. Now, let $p$ be the parent of the parent of $a$ in $\cT$, and let $c$ be the child of $p$ that is not the parent of $a$. If $c$ is a leaf, then it is easily checked that $\{b,c\}$ is a cherry in $\cT[-a]$ and, so $c_{\cT}-c_{\cT[-a]}=0$. On the other hand, if $c$ is not a leaf, then $b$ is not part of a cherry in $\cT[-a]$ and, so, $c_{\cT}-c_{\cT[-a]}=1$. \qed
\end{proof}

We now define a labeled tree that will play an important role throughout the remainder of this section. Let $\cT$ be a rooted binary phylogenetic $X$-tree with cherries $\{\{a_1,b_1\},\{a_2,b_2\},\ldots,\{a_{c_\cT},b_{c_\cT}\}\}$. Obtain a tree $\cT_\cI$ from $\cT$ as follows.

\begin{enumerate}[Step (1).]
\item Set $\cT_\mathcal{I}$ to be $\cT$.
\item Delete all leaves of $\cT_\mathcal{I}$ that are not part of a cherry. 
\item Suppress any resulting degree-2 vertex. 
\item If the root, say $\rho$, has degree one, delete $\rho$.
\item For each cherry $\{a_i, b_i\}$ with $i\in\{1,2,\ldots,c_\cT\}$, label the parent of $a_i$ and $b_i$ with $i$, and delete the two leaves $a_i$ and $b_i$. 
\item Bijectively label the non-leaf vertices of $\cT_\mathcal{I}$ with $c_\cT+1,c_\cT+2,\ldots,2c_\cT-1$.
\end{enumerate}
\noindent We call $\cT_\cI$ the {\it index tree} of $\cT$. By construction, $\cT_\cI$ is a labeled rooted binary tree that is unique up to relabeling the internal vertices. To illustrate, an example of the construction of an index tree is shown in Figure~\ref{fig:index}. The next observation follows immediately from the construction of an index tree.

\begin{observation}
Let $\cT$ be a rooted binary phylogenetic tree, and let $\cT_\cI$ be the index tree associated with $\cT$. The size of $\cT_\cI$ is $O(c_\cT)$. In particular, if the number of cherries in $\cT$ is constant, the size of $\cT_\cI$ is $O(1)$.
\end{observation}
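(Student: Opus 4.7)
The plan is to count the vertices of $\cT_\cI$ by tracking what survives each step of the construction. First, I would observe that $\cT$ contains exactly $2c_\cT$ leaves that belong to a cherry, since the cherries are vertex-disjoint. After Step~(2), precisely these $2c_\cT$ vertices remain as the leaves of the intermediate tree. Steps~(3) and~(4) only suppress degree-$2$ internal vertices and, if necessary, delete a degree-$1$ root; neither operation changes the leaf set, and the outcome is a genuine rooted binary tree on $2c_\cT$ leaves, since every remaining non-root internal vertex has degree~$3$ and the root has degree~$2$.

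Step~(5) then bijectively replaces each of the $c_\cT$ cherry pairs $\{a_i,b_i\}$ with their common parent, which becomes a labeled leaf of $\cT_\cI$. Hence $\cT_\cI$ is a rooted binary tree with exactly $c_\cT$ leaves, and the standard fact that a rooted binary tree on $k$ leaves has $k-1$ internal vertices yields $c_\cT - 1$ internal vertices in $\cT_\cI$ (consistent with Step~(6), which labels them $c_\cT+1,\ldots,2c_\cT-1$). Summing leaves and internal vertices gives total size $2c_\cT-1 \in O(c_\cT)$, and specializing to $c_\cT = O(1)$ yields the ``in particular'' claim. The only thing to be careful about is verifying that after Steps~(2)--(4) the intermediate structure is a bona fide rooted binary tree, but this is immediate from the definitions of suppression and of a rooted binary phylogenetic tree, so there is no substantial obstacle in the argument.
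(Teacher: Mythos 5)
Your argument is correct and is exactly the justification the paper has in mind: the paper states this observation without proof as following ``immediately from the construction,'' and your count (the $2c_\cT$ cherry leaves survive Steps (2)--(4), Step (5) turns them into $c_\cT$ leaves of a rooted binary tree, hence $2c_\cT-1$ vertices in total) is the intended bookkeeping, consistent with the labels $c_\cT+1,\ldots,2c_\cT-1$ used in Step (6). No gaps.
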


\begin{figure}
\centering
\includegraphics[width=0.8\textwidth]{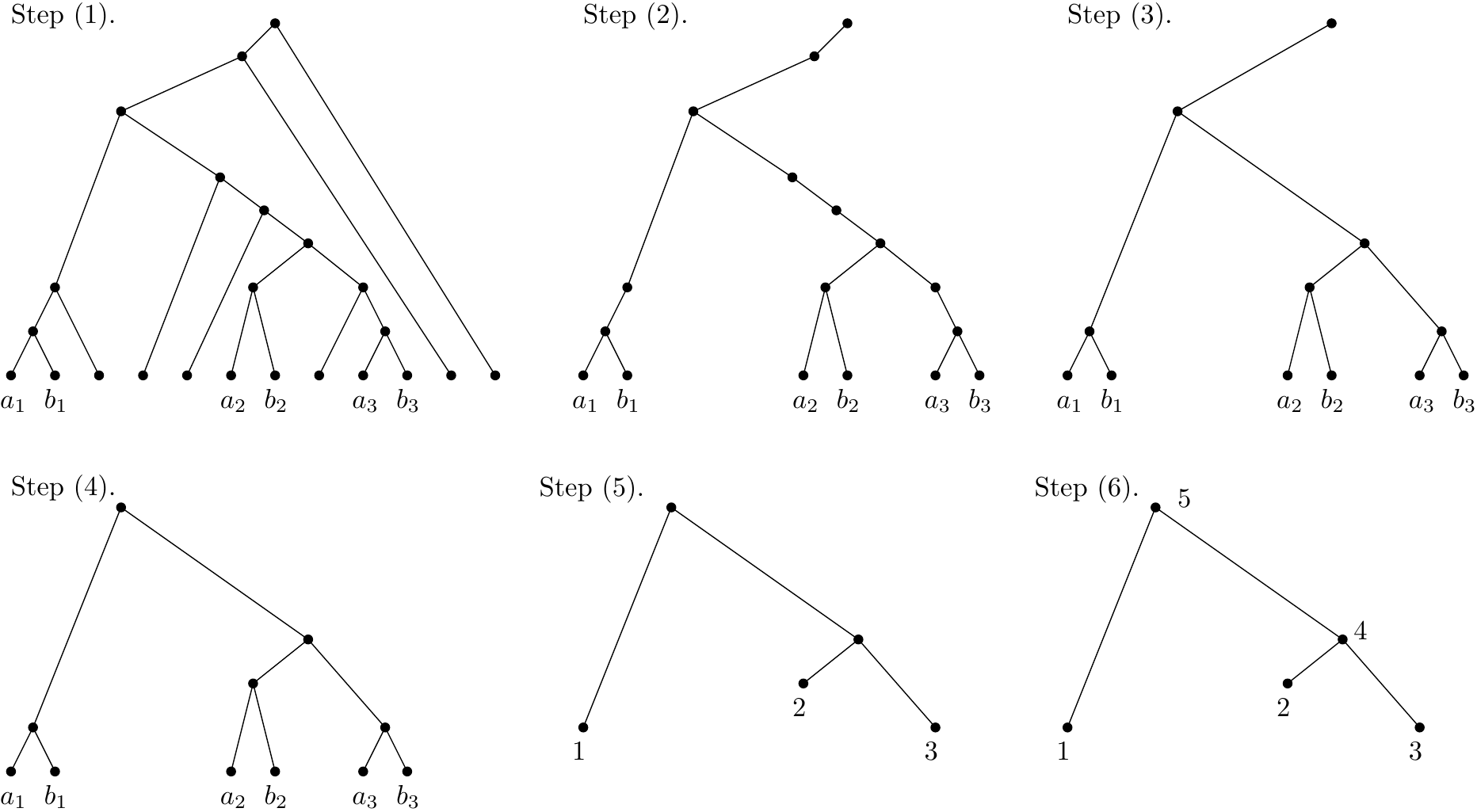}
\caption{An example of the construction of an index tree. Steps (1) to (6) refer to the corresponding steps in the definition of an index tree. For simplicity, in Step (1), we have only indicated the leaf labels of leaves that are part of a cherry.}
\label{fig:index}
\end{figure}

We next define a particular vector relative to a given set. Let $S$ be a finite set, let $\epsilon$ be an element that is not in $S$, and let $n$ be a non-negative integer. We call
\begin{eqnarray}
v=&(&[\xi_1](x_{1}^1,x_{1}^2,\ldots,x_{1}^{q_1},\epsilon),\nonumber\\
&&[\xi_2](x_{2}^1,x_{2}^2,\ldots,x_{2}^{q_2},\epsilon),\nonumber\\
&&\vdots \nonumber\\
&&[\xi_n](x_n^1,x_n^2,\ldots,x_n^{q_n},\epsilon)\hspace{0.3cm})\nonumber
\end{eqnarray}
an {\it $S$-vector} if each element in $S$ appears at most once in $v$, each $\xi_i$ is an element in $S\cup\{\epsilon\}$, and each $x_i^j$ is an element in $S$.
Now consider the following two $S$-vectors:
\begin{eqnarray}
v=&(&[\xi_1](x_{1}^1,x_{1}^2,\ldots,x_{1}^{q_1},\epsilon),\nonumber\\
&&[\xi_2](x_{2}^1,x_{2}^2,\ldots,x_{2}^{q_2},\epsilon),\nonumber\\
&&\vdots \nonumber\\
&&[\xi_n](x_n^1,x_n^2,\ldots,x_n^{q_n},\epsilon)\hspace{0.3cm})\nonumber
\end{eqnarray}
and
\begin{eqnarray}
v'=&(&[\psi_1](y_{1}^1,y_{1}^2,\ldots,y_{1}^{r_1},\epsilon),\nonumber\\
&&[\psi_2](y_{2}^1,y_{2}^2,\ldots,y_{2}^{r_2},\epsilon),\nonumber\\
&&\vdots \nonumber\\
&&[\psi_n](y_n^1,y_n^2,\ldots,y_n^{r_n},\epsilon)\hspace{0.3cm})\nonumber
\end{eqnarray}
We say that $v'$ has the {\it suffix-property} relative to $v$ if, for each $s\in\{1,2,\ldots,n\}$, the vector component $[\psi_s](y_{s}^1,y_{s}^2,\ldots,y_{s}^{r_s},\epsilon)$ is equal to $[\psi_s](\epsilon)$ or satisfies each of the following equations
\[
y_{s}^{r_s} = x_{s}^{q_s}, \; y_{s}^{r_s-1} = x_{s}^{q_s-1}, \; \ldots, \; y_{s}^{1} =  x_{s}^{q_s-r_s+1}.
\] 
Lastly, if $v'$ has the property that $[\psi_i](y_{i}^1,y_{i}^2,\ldots,y_{i}^{r_i},\epsilon)=[\epsilon](\epsilon)$ for each $i\in\{1,2,\ldots,n\}$, we call $v'$ the {\it empty vector}. Note that the empty vector satisfies the suffix-property relative to every $S$-vector.

Building on the definition of an $S$-vector, we now describe a vector representation of a rooted binary phylogenetic tree that can be constructed by using its index tree as a guide. Roughly, the representation associates a caterpillar-type structure to each vertex in the index tree. 
Let $\cT$ be a rooted binary phylogenetic $X$-tree, let $X'\subseteq X$, and let $\epsilon\notin X$. For two vertices $u$ and $v$ in $\cT$, we say that $u$ (resp. $v$) is an {\it ancestor} (resp. {\it descendant}) of $v$ (resp. $u$) if there is a directed path from $u$ to $v$ in $\cT$.  Throughout this section, we regard a vertex $v$ of $\cT$ to be an ancestor and a descendant of itself. 
The {\em most recent common ancestor} of $X'$ is the vertex $v$ in $\cT$ whose set of descendants contains $X'$ and no descendant of $v$, except $v$ itself, has this property. We denote $v$ by $\mrca_{\cT}(X')$. 
Now, let $\{\{a_1,b_1\},\{a_2,b_2\},\ldots,\{a_{c_\cT},b_{c_\cT}\}\}$ be the set of all cherries in $\cT$. First, for each leaf $i\in\{1,2,\ldots,c_\cT\}$ in $\cT_\cI$, let $(a_i,x_{i}^1,x_{i}^2,\ldots,x_{i}^q)$ be the maximal pendant caterpillar in $\cT$ with cherry $\{a_i,b_i\}$. We denote this by $$[\xi_i](x_{i}^1,x_{i}^2,\ldots,x_{i}^q,\epsilon),$$ where $\xi_i=a_i$ and  $x_{i}^1=b_i$.  Second, for each non-leaf vertex labeled $i$ in $\cT_\cI$ with $i\in\{c_\cT+1,c_\cT+2,\ldots,2c_\cT-1\}$, let $v_i$ be the vertex in $\cT$ such that $$v_i={\rm mrca}_\cT(\{a_j,b_j\mid j \textnormal{ is a descendant of } i \textnormal{ in } \cT_\cI\}),$$ and let $\cT_i$ be the rooted binary phylogenetic tree obtained from $\cT$ by replacing the pendant subtree rooted at $v_i$ with a leaf labeled $v_i$. Now, if $v_i$ is a leaf of a cherry in $\cT_i$, let $(v_i,x_{i}^1,x_{i}^2,\ldots,x_{i}^q)$ be the maximal pendant caterpillar in $\cT_i$ with cherry $\{v_i,x_{i}^1\}$. We denote this by $$[\epsilon](x_{i}^1,x_{i}^2,\ldots,x_{i}^q,\epsilon).$$ Otherwise, if $v_i$ is not a leaf of a cherry in $\cT_i$, we denote this by $$[\epsilon](\epsilon).$$ Now, recall that $2c_\cT-1$ is the number of vertices in $\cT_\cI$. Setting $n=2c_\cT-1$, we call 
\begin{eqnarray}
v_\cT=&(&[\xi_1](x_{1}^1,x_{1}^2,\ldots,x_{1}^{q_1},\epsilon),\nonumber\\
&&[\xi_2](x_{2}^1,x_{2}^2,\ldots,x_{2}^{q_2},\epsilon),\nonumber\\
&&\vdots\nonumber\\
&&[\xi_{n}](x_{n}^1,x_{n}^2,\ldots,x_{n}^{q_{n}},\epsilon)\hspace{0.3cm})\nonumber
\end{eqnarray}
the {\it vector representation of $\cT$ relative to $\cT_\cI$}, and note that $$\xi_{c_\cT+1}=\xi_{c_\cT+2}=\cdots=\xi_{2c_\cT-2}=\xi_{n}=\epsilon.$$ An example of a tree and its vector representation is shown in Figure~\ref{fig:vector}.

\begin{figure}
\centering
\includegraphics[width=0.35\textwidth]{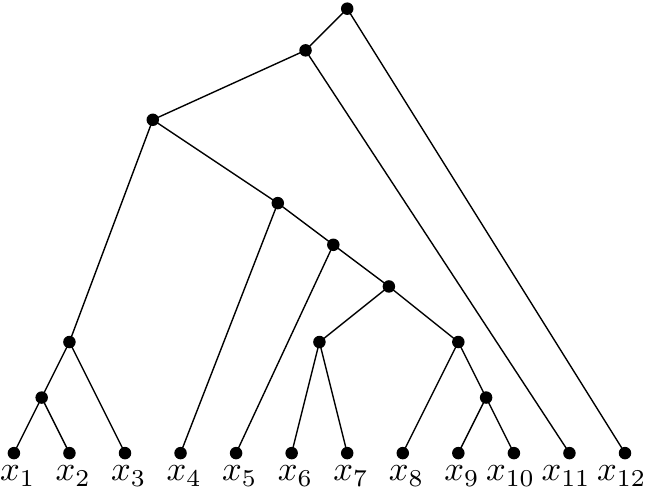}
\caption{A rooted binary phylogenetic tree $\cT$ whose index tree $\cT_\cI$  is shown in Step (6) of Figure~\ref{fig:index}. The vector representation of $\cT$ relative to $\cT_\cI$ is $([x_1](x_2,x_3,\epsilon),[x_6],(x_7,\epsilon),[x_9](x_{10},x_8,\epsilon),[\epsilon](x_5,x_4,\epsilon),[\epsilon](x_{11}, x_{12},\epsilon))$.}
\label{fig:vector}
\end{figure}

Let $\cT$ be a rooted binary phylogenetic $X$-tree with $\epsilon\notin X$, and let $\cT_\cI$ be the index tree of $\cT$. Let $v_\cT$  be the vector representation relative to $\cT_\cI$. Furthermore, let $\cT'$ be an element in $\cC(\cT)$, and let 
\begin{eqnarray}
v_{\cT'}=&(&[\psi_1](y_{1}^1,y_{1}^2,\ldots,y_{1}^{r_1},\epsilon),\nonumber\\
&&[\psi_2](y_{2}^1,y_{2}^2,\ldots,y_{2}^{r_2},\epsilon),\nonumber\\
&&\vdots \nonumber\\
&&[\psi_n](y_n^1,y_n^2,\ldots,y_n^{r_n},\epsilon)\hspace{0.3cm})\nonumber
\end{eqnarray}
be an $X$-vector for $\cT'$. We say that  $v_{\cT'}$  has the {\it cherry-property relative to $v_\cT$} if, for each cherry $\{a,b\}$ in $\cT'$, exactly one of the following conditions holds:
\begin{enumerate}[(i)]
\item There is an index $s \in \{1,\,2,\,\ldots,\,n\}$ such that $\{\psi_s, y_{s}^1\} = \{a,b\}$. 
\item There are two distinct indices $s,t \in \{1,\,2,\,\ldots,\,n\}$ such that $\{\psi_s, \psi_t\} = \{a,b\}$, the two corresponding vector components are $[\psi_s](\epsilon)$ and $[\psi_t](\epsilon)$, respectively, there is a vertex labeled $u$ in $\cT_\cI$ whose two children are labeled $s$ and $t$, and $\psi_u = \epsilon$.
\end{enumerate}

To establish Theorem~\ref{t:cherry}, we next prove three lemmas.

\begin{lemma}\label{l:map}
Let $\cT$ be a rooted binary phylogenetic $X$-tree, and let $v_\cT$ be the vector representation of $\cT$ relative to an index tree of $\cT$. Then each tree $\cT''$ in $\cC(\cT)$ can be mapped to an $X$-vector that satisfies the suffix-property and the cherry-property relative to $v_\cT$. Moreover, the mapping is one-to-one.
\end{lemma}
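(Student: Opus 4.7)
The plan is to construct the mapping $\phi$ by induction on the number $k$ of cherry-leaf deletions required to pass from $\cT$ to $\cT''$, setting $\phi(\cT) := v_\cT$ and $\phi(\emptyset)$ equal to the empty vector as the base cases. The base satisfies both properties trivially: the suffix-property holds because each tuple equals itself, and the cherry-property holds because, by construction of $v_\cT$, every cherry $\{a_i,b_i\}$ of $\cT$ appears at the leaf $i$ of $\cT_\cI$ in case (i) with $\psi_i=\xi_i=a_i$ and $y_i^1=x_i^1=b_i$. For the inductive step, write $\cT''=\cT'[-x]$ with $\cT'\in\cC(\cT)$ and $x$ a cherry leaf of $\cT'$, and apply the hypothesis to $v_{\cT'}:=\phi(\cT')$; the cherry-property forces the cherry of $\cT'$ containing $x$ to sit in exactly one of cases (i) or (ii), and $\phi(\cT'')$ is obtained from $v_{\cT'}$ by a local update dictated by that case.

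In case (i), the cherry lives at a single component $s$ with $\{\psi_s,y_s^1\}=\{x,b\}$; the update replaces the component by one whose new $\psi$-entry is the surviving cherry leaf and whose tuple is $(y_s^2,\ldots,y_s^{r_s},\epsilon)$. In case (ii), the cherry lives across two sibling positions $s,t$ of $\cT_\cI$ whose common parent $u$ satisfies $\psi_u=\epsilon$; after emptying the component at position $s$ (where $\psi_s=x$) to $[\epsilon](\epsilon)$, the surviving leaf from $t$ must be absorbed into the $\psi$-slot of $u$, reflecting that the suppression of the cherry parent in $\cT'[-x]$ now places this leaf directly under the vertex $v_u$. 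The same absorption is needed in the degenerate subcase $r_s=1$ of case (i), where the updated component $s=[\psi_s^{\text{new}}](\epsilon)$ is itself a single leaf that now pairs with a sibling via case (ii), possibly cascading one level further up $\cT_\cI$ along the unique root-ward path.

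The suffix-property is preserved because every modified tuple is a strictly shorter suffix of the same $v_\cT$-tuple, and no $\psi$-slot ever takes a value from outside the position to which it has been cascaded. For the cherry-property, each cherry of $\cT''$ either already existed in $\cT'$ at an untouched position or is a new cherry produced by the suppression of the degree-two vertex resulting from deleting $x$; in the latter situation, inspection shows that after the cascade the new cherry appears either as case (i) at the first ancestor of $s$ in $\cT_\cI$ whose tuple is non-empty, or as case (ii) across $s$ and its index-tree-sibling. Well-definedness of $\phi$ follows because the update depends only on the cherry being deleted and the local root-ward path in $\cT_\cI$; deletions occurring along disjoint such paths commute, and any two deletion sequences producing the same $\cT''$ differ by a finite composition of such commutations. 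Injectivity is immediate: the leaf set $X''$ of $\cT''$ is readable from $\phi(\cT'')$ as the collection of non-$\epsilon$ symbols, and $\cT''=\cT|X''$ is then determined since $\cT$ is fixed.

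The main obstacle is tracking the cascading absorption up $\cT_\cI$, particularly when several consecutive ancestors of $s$ have empty caterpillar tuples $[\epsilon](\epsilon)$: one must show that the cascade terminates at the first ancestor with a non-empty tuple (or at the root of $\cT_\cI$) and that the resulting vector continues to satisfy the cherry-property for every cherry of $\cT''$. The case analysis distinguishing whether the terminating ancestor is a leaf or an internal vertex of $\cT_\cI$, whether its tuple is currently empty, and whether its index-tree-sibling is in the appropriate state is delicate, but the two cases of the cherry-property are tailored precisely to capture these possibilities.
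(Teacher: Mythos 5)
Your overall skeleton matches the paper's proof: the mapping is defined recursively from $v_\cT$ and the empty vector, and each deletion of a cherry leaf triggers one of two local updates according to whether the cherry sits inside a single vector component (the paper's case (M1)) or spans two sibling components whose common parent $u$ in $\cT_\cI$ has $\psi_u=\epsilon$ (the paper's case (M2)). Your descriptions of those two updates, and your injectivity argument (read the leaf set off the vector), agree with the paper's.

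The genuine gap is the ``cascading absorption'' you introduce for the degenerate subcase $r_s=1$ of case (i). In the paper there is no cascade: when $r_s=1$ the component simply becomes $[b](\epsilon)$ and stays at position $s$. Condition (ii) of the cherry-property exists precisely so that a cherry whose two leaves sit in sibling components $[b](\epsilon)$ and $[\psi_t](\epsilon)$ is already correctly represented without touching the parent; the surviving leaf is absorbed into the parent's $\psi$-slot only later, namely when that cross-component cherry is itself picked (case (M2)). Performing the absorption eagerly breaks the representation: if after deleting $x$ both children $s,t$ of $u$ carry single leaves $b$ and $\psi_t$, so that $\{b,\psi_t\}$ is a cherry of $\cT''$, then moving $b$ up gives $u$ the component $[b](y_u^1,\ldots,y_u^{r_u},\epsilon)$; condition (i) would require $y_u^1=\psi_t$, which is false in general, and condition (ii) cannot apply because the two components now sit at $u$ and $t$, and $t$ is a child of $u$ rather than its sibling in $\cT_\cI$. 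Worse, if the subtree at $t$ has not yet collapsed to a single leaf, then $b$ is in no cherry of $\cT''$ at all and the premature absorption misplaces it entirely, since $\psi_u$ is meant to hold the leaf that hangs where $v_u$ used to be, which only happens after the structure at $v_u$ collapses. Consequently the multi-level termination argument you flag as the ``main obstacle'' is an obstacle of your own making: the correct mapping modifies at most the two siblings and their common parent in a single step, and the two conditions of the cherry-property are exactly what lets the verification proceed one level at a time (this is how the paper disposes of the $r_s=1$ and $r_u=0$ subcases, via the sibling component rather than via a cascade).
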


\begin{proof}
Set $n=2c_\cT-1$.  We  define a mapping $f$ from the elements in $\cC(\cT)$ into the set of all $X$-vectors that satisfy the suffix-property and the cherry-property relative to $v_\cT$. First, we map $\cT$ to $v_\cT$ and note that $v_\cT$ satisfies the suffix-property and the cherry-property relative to $v_\cT$. Second, we map the element $\emptyset$ in $\cC(\cT)$ to the empty vector, say $v_\emptyset$,  with $n$ vector components. Again, $v_\emptyset$ satisfies the suffix-property and the cherry-property relative to $v_\cT$. Now, let $\cT''$ be an element in $\cC(\cT)\setminus\{\cT,\emptyset\}$. Recalling the recursive definition of $\cC(\cT)$, there exists a tree $\cT'$ in $\cC(\cT)$ with cherry $\{a,b\}$ such that $\cT'[-a]$ is isomorphic to $\cT''$.
Suppose that $f$ (defined below) has already mapped $\cT'$  to the $X$-vector
\begin{eqnarray}
v_{\cT'}=&(&[\psi_1](y_{1}^1,y_{1}^2,\ldots,y_{1}^{r_1},\epsilon),\nonumber\\
&&[\psi_2](y_{2}^1,y_{2}^2,\ldots,y_{2}^{r_2},\epsilon),\nonumber\\
&&\vdots \nonumber\\
&&[\psi_n](y_n^1,y_n^2,\ldots,y_n^{r_n},\epsilon)\hspace{0.3cm}),\nonumber
\end{eqnarray}
that satisfies the suffix-property as well as the cherry-property relative to $v_{\cT}$. Then $f$  maps $\cT''$ to a vector that can be obtained from $v_{\cT'}$ in one of the following two cases.
\begin{enumerate}[{\bf (M1)}]
\item If there is an index $s \in \{1,\,2,\,\ldots,\,n\}$ such that $\{\psi_s, y_{s}^1\} = \{a, b\}$, then $f$ maps $\cT''$ to a vector $v_{\cT''}$ that is obtained from $v_{\cT'}$ by replacing the vector component $$[\psi_s](y_{s}^1,y_{s}^2,\ldots,y_{s}^{r_s},\epsilon) \text{ with } [b](y_{s}^2,y_{s}^3\ldots,y_{s}^{r_s},\epsilon).$$ 
\item Otherwise, there are two indices $s,t \in \{1,\,2,\,\ldots,\,n\}$ with $s \neq t$ such that $\{\psi_s, \psi_t\} = \{a, b\}$, where the two corresponding components have the form $[\psi_s](\epsilon)$ and $[\psi_t](\epsilon)$, respectively. Furthermore, by construction, there is a vertex labeled  $u$ in $\cT_\cI$ whose two children are the vertices labeled $s$ and $t$ and  $\psi_u=\epsilon$. Then, $f$ maps $\cT''$ to a vector $v_{\cT''}$ that is obtained from $v_{\cT'}$ by replacing each of the two vector components $$[\psi_s](\epsilon)\text{ and } [\psi_t](\epsilon)\text{ with } [\epsilon](\epsilon),$$ and replacing the vector component $$[\epsilon](y_{u}^1,y_{u}^2,\ldots,y_{u}^{r_u},\epsilon) \text{ with } [b](y_{u}^1,y_{u}^2\ldots,y_{u}^{r_u},\epsilon).$$
\end{enumerate}
For both cases, it is easily checked that $v_{\cT''}$ is an $X$-vector that satisfies the suffix-property relative to $v_\cT$. 

We next show that $v_{\cT''}$ satisfies the cherry-property relative to $v_\cT$. By Lemma \ref{lem:cherries}, we have $c_{\cT'}-c_{\cT''}\in\{0,1\}$. If $c_{\cT'}-1=c_{\cT''}$ then, by construction, each cherry in $\cT''$ is a cherry in $\cT'$. Hence, as $v_{\cT'}$ satisfies the cherry-property relative to $v_\cT$, we have that $v_{\cT''}$ satisfies the cherry-property relative to $v_\cT$. Otherwise, if $c_{\cT'}=c_{\cT''}$, then the cherry $\{a,b\}$ in $\cT'$ is replaced with a new cherry that contains $b$, while all other cherries in $\cT'$ are also cherries in $\cT''$ . First, suppose that $\cT''$ is obtained from $\cT'$ according to mapping (M1). Observe that $r_s\geq 1$. If $r_s\geq 2$, then $\{b,y_s^2\}$ is the new cherry and, thus, $v_{\cT''}$ satisfies the cherry-property relative to $v_\cT$. On the other hand, if $r_s=1$, let $[\psi_t](y_t^1,y_t^2,\ldots,y_t^{r_t},\epsilon)$ be the vector component in $v_{\cT'}$ such that the vertices labeled $s$ and $t$ in $\cT_\cI$ have the same parent. Note that $t$ exists because, otherwise, $s$ is the root of $\cT_\cI$ and so the existence of a cherry in $\cT''$ that is not a cherry in $\cT'$ implies that $r_s\geq 2$; a contradiction. If $r_t\geq 1$, then $\{\psi_t,y_t^1\}$ is a cherry in $\cT'$ and $\cT''$. Thus, the sibling of $b$ in $\cT''$ is not a leaf, thereby contradicting that $b$ is a leaf of a cherry in $\cT''$. Hence, $[\psi_t](y_t^1,y_t^2,\ldots,y_t^{r_t},\epsilon)=[\psi_t](\epsilon)$. Now, as  $[b](\epsilon)$ and $[\psi_t](\epsilon)$ are two vector components of $v_{\cT''}$, it again follows that $v_{\cT''}$ satisfies the cherry-property relative to $v_\cT$. Second, suppose that $\cT''$ is obtained from $\cT'$ according to mapping (M2). Noting that $b$ is an element of the vector component $[\psi_u](y_u^1,y_u^2,\ldots,y_t^{r_u},\epsilon)$ with $\psi_u=b$ in $v_{\cT''}$, the result can be established by using an argument that is similar to the previous case. 

It remains to show that the mapping is one-to-one. Let $\cT'$ and $\cT''$ be two distinct elements in $\cC(\cT)\setminus\{\emptyset\}$. Since each element in $\cC(\cT)\setminus\{\cT, \emptyset\}$ can be obtained from $\cT$ by  repeatedly deleting a leaf of a cherry and suppressing the resulting degree-2 vertex, there exists an element $\ell$ in $X$ that is a leaf in $\cT'$ and not a leaf in $\cT''$. Let $X'$ and $X''$ be the leaf set of $\cT'$ and $\cT''$, respectively. Noting that $v_\cT$ is an $X$-vector that contains each element in $X$ exactly once, it follows from construction of the mapping that $v_{\cT'}$ is an $X'$-vector that contains each element in $X'$ exactly once and that $v_{\cT''}$ is an $X''$-vector that contains each element in $X''$ exactly once. Hence $v_{\cT'}\ne v_{\cT''}$. Moreover, since no element in $\cC(\cT)\setminus\{\emptyset\}$ is mapped to $v_\emptyset$, the mapping is one-to-one. This completes the proof of the lemma. \qed
\end{proof}

\begin{lemma}\label{lem:|C(T)|}
Let $\cT$ be a rooted binary phylogenetic $X$-tree. Then $$|\cC(\cT)|\leq (|X|+1)^{4c_\cT-2}.$$
\end{lemma}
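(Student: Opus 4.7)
The plan is to use Lemma~\ref{l:map} to pass from counting elements of $\cC(\cT)$ to counting certain $X$-vectors, and then bound the latter by a simple component-by-component argument. Since Lemma~\ref{l:map} gives a one-to-one map from $\cC(\cT)$ into the set of $X$-vectors satisfying the suffix-property and the cherry-property relative to $v_\cT$, it suffices to give an upper bound on the size of this target set. I would in fact discard the cherry-property entirely for counting purposes, since its only role is to witness that specific vectors arise from trees; relaxing to just the suffix-property can only enlarge the set and so still yields an upper bound on $|\cC(\cT)|$.

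Next, I would set $n=2c_\cT-1$ (the number of components of $v_\cT$, equal to the number of vertices of the index tree) and analyze the degrees of freedom in a single component. Fixing an index $s$, the suffix-property forces the tail $(y_s^1,\ldots,y_s^{r_s})$ to be a specific suffix of $(x_s^1,\ldots,x_s^{q_s})$ of some length $r_s\in\{0,1,\ldots,q_s\}$, with $r_s=0$ corresponding to $[\psi_s](\epsilon)$. Meanwhile $\psi_s$ is unconstrained in $X\cup\{\epsilon\}$. Using the trivial bound $q_s\le |X|$ gives at most $|X|+1$ choices for $r_s$ and exactly $|X|+1$ choices for $\psi_s$, so each component contributes at most $(|X|+1)^2$ possibilities.

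Finally I would multiply across all $n$ components:
\[
|\cC(\cT)|\;\le\;\bigl((|X|+1)^2\bigr)^{n}\;=\;(|X|+1)^{2(2c_\cT-1)}\;=\;(|X|+1)^{4c_\cT-2}.
\]
The entire argument is short; the one place where I would be careful is in correctly identifying the two independent degrees of freedom per component (the suffix length $r_s$ and the head symbol $\psi_s$), since conflating them would give the wrong exponent. I expect no genuine obstacle beyond this bookkeeping, because all the structural work has already been done by the injection of Lemma~\ref{l:map} and by the suffix/cherry characterization of the image.
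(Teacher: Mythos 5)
Your proposal is correct and follows essentially the same route as the paper: it invokes the injection of Lemma~\ref{l:map}, drops the cherry-property and counts only the $X$-vectors satisfying the suffix-property, and bounds each of the $n=2c_\cT-1$ components by $(|X|+1)^2$ (at most $|X|+1$ suffixes times $|X|+1$ choices of head symbol), yielding $(|X|+1)^{4c_\cT-2}$.
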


\begin{proof}
Let 
\begin{eqnarray}
v_{\cT}=&(&[\xi_1](x_{1}^1,x_{1}^2,\ldots,x_{1}^{q_1},\epsilon),\nonumber\\
&&[\xi_2](x_{2}^1,x_{2}^2,\ldots,x_{2}^{q_2},\epsilon),\nonumber\\
&&\vdots \nonumber\\
&&[\xi_n](x_n^1,x_n^2,\ldots,x_n^{q_n},\epsilon)\hspace{0.3cm})\nonumber
\end{eqnarray}
be the vector representation of $\cT$ relative to an index tree of $\cT$, where $n=2c_\cT-1$. We first derive an upper bound on the number of $X$-vectors that satisfy the suffix-property relative to $v_\cT$. For each $i\in\{1,2,\ldots,n\}$, consider the vector component $[\xi_i](x_{i}^1,x_{i}^2,\ldots,x_{i}^{q_i},\epsilon)$. Then each $X$-vector that satisfies the suffix-property relative to $\cT$ has an $i$th vector component, say $[\psi_i](y_{i}^1,y_{i}^2,\ldots,y_{i}^{r_i},\epsilon)$, such that $\psi_i\in X\cup\{\epsilon\}$ and $(y_{i}^1,y_{i}^2,\ldots,y_{i}^{r_i},\epsilon)$ is a suffix of $(x_{i}^1,x_{i}^2,\ldots,x_{i}^{q_i},\epsilon)$. Since there are at most $|X|+1$ such suffixes, it follows that there are at most $(|X|+1)^2$ variations of $[\psi_i](y_{i}^1,y_{i}^2,\ldots,y_{i}^{r_i},\epsilon)$. Hence, there are at most 
\[
((|X| + 1)^2)^n = (|X| + 1)^{4c_\cT-2}.
\]
$X$-vectors that satisfy the suffix-property relative to $v_\cT$. By Lemma~\ref{l:map}, each tree in $\cC(\cT)$ can be mapped to one such vector and, as the map is one-to-one, it follows that $\cC(\cT)$ contains at most $(|X| + 1)^{4c_\cT-2}$ trees. \qed
\end{proof}

For a rooted binary phylogenetic $X$-tree $\cT$, the next lemma constructs an automaton that recognizes whether or not a word that contains each element in $X$ precisely once is a cherry-picking sequence for $\cT$. 


\begin{lemma}\label{lem:automaton}
Let $\cT$ be a rooted binary phylogenetic $X$-tree. There is a  deterministic finite automaton~$\cA_\cT$ with $O(|X|^{4c_\cT-2})$ states that recognizes the language 
\[
\cL_{X}(\cT) = \{x_1x_2\ldots x_{|X|} \mid (x_1,\,x_2,\,\ldots,\,x_{|X|}) \text{ is a cherry-picking sequence for } \cT \}.
\]
Moreover, the automaton~$\cA_\cT$ can be constructed in time $f(|X|, c_\cT) \in |X|^{O(c_\cT)}$.
\end{lemma}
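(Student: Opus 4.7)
The plan is to take the DFA whose states are essentially the trees in $\cC(\cT)$, tracking which cherry-picked tree remains after consuming the prefix read so far. Concretely, set $\cA_\cT = (Q, \Sigma, \delta, q_{\text{ini}}, F)$ with $Q = \cC(\cT) \cup \{q_{\text{rej}}\}$ for a fresh reject state $q_{\text{rej}}$, $\Sigma = X$, $q_{\text{ini}} = \cT$, and $F = \{\emptyset\}$. The transition relation is the natural one: for $\cT' \in \cC(\cT)$ with at least two leaves, set $\delta(\cT',x) = \cT'[-x]$ when $x$ labels a leaf of some cherry of $\cT'$, and $\delta(\cT',x) = q_{\text{rej}}$ otherwise; for a single-leaf tree $\{y\} \in \cC(\cT)$, set $\delta(\{y\},y) = \emptyset$ and send every other symbol to $q_{\text{rej}}$; from $\emptyset$ and $q_{\text{rej}}$ every transition loops to $q_{\text{rej}}$. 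Property~(P) of $\cC(\cT)$ ensures that every target lies in $Q$, so $\delta$ is a total function as required.

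For correctness, I would argue by induction on $i$ that after reading the prefix $x_1 \cdots x_i$ with $i < |X|$, the automaton is in state $\cT[-x_1,x_2,\ldots,x_i]$ precisely when $(x_1,\ldots,x_i)$ is a valid prefix of a cherry-picking sequence for $\cT$, and in $q_{\text{rej}}$ otherwise. Running this to $i = |X|-1$ and then applying the final transition shows that $\cA_\cT$ ends in $\emptyset$ iff $(x_1,\ldots,x_{|X|-1})$ is a valid cherry-picking prefix and $x_{|X|}$ is the unique remaining leaf. By the definition of a cherry-picking sequence, which imposes no cherry requirement on the last element, this is exactly the condition $x_1 x_2 \cdots x_{|X|} \in \cL_X(\cT)$.

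The bound $|Q| = |\cC(\cT)| + 1 \le (|X|+1)^{4c_\cT-2}+1 \in O(|X|^{4c_\cT-2})$ follows immediately from Lemma~\ref{lem:|C(T)|}. To build $\cA_\cT$ within $f(|X|, c_\cT) \in |X|^{O(c_\cT)}$ time, I would first compute an index tree $\cT_\cI$ and the vector representation $v_\cT$ in linear time, and then enumerate the reachable states by a BFS from $v_\cT$: for each discovered state and each $x \in X$, determine whether $x$ is a leaf of a cherry (easily read off from the vector) and, if so, apply the rules (M1) or (M2) from the proof of Lemma~\ref{l:map} to obtain the vector for $\cT'[-x]$. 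The main obstacle is to recognise already-discovered states without repeated isomorphism tests; the one-to-one mapping of Lemma~\ref{l:map} solves this, because two distinct trees in $\cC(\cT)$ have distinct vectors relative to the fixed index tree $\cT_\cI$, so a hash table keyed by vectors serves as a canonical lookup. Each of the at most $|X| \cdot (|X|+1)^{4c_\cT-2}$ (state, symbol) pairs is then processed in time polynomial in $|X|$ and $c_\cT$, and the overall construction fits within $|X|^{O(c_\cT)}$, as claimed.
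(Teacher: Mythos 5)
Your construction is essentially identical to the paper's: the states are the cherry-picked trees in $\cC(\cT)$ plus a reject sink, the transitions delete a cherry leaf (or the last remaining leaf), the size bound comes from Lemma~\ref{lem:|C(T)|}, and the automaton is built by a breadth-first exploration from $\cT$. Your use of the vector representation from Lemma~\ref{l:map} as a canonical key for detecting already-discovered states is a slightly more explicit implementation detail than the paper's isomorphism check, but the argument is the same.
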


\begin{proof}
Throughout this proof, we denote the tree without a vertex by $\emptyset$. Let $M$ and $M'$ be two sets. Setting $M = \{\cT\}$ and $M' = \emptyset$, we construct $\cA_\cT$  as follows.
\begin{enumerate}[(1)]
\item Create the states $q_\cT$, $q_\emptyset$, and $q_e$. For each $a \in X$, set $\delta(q_e, a) = \delta(q_\emptyset, a) = q_e$. 
\item For each $\cT \in M$ and each $a \in X$ do the following.
\begin{enumerate}[(a)]
\item If $a$ is a leaf of a cherry in $\cT$ or $a$ is the only vertex of $\cT$, then
\begin{enumerate}[(i)]
\item create the state $q_{\cT[-a]}$ if $\cT[-a]$ is not isomorphic to a tree in $M'$, 
\item set $M' = M' \cup \cT[-a]$, and
\item set $\delta(q_{\cT}, a) = q_{\cT[-a]}$. 
\end{enumerate}
\item Otherwise, set $\delta(q_{\cT}, a) = q_e$.
\end{enumerate}
\item Set  $M = M'$ and, subsequently, set $M' = \emptyset$. If $M \neq \{\emptyset\}$, continue with $(2)$.
\end{enumerate}
We set the initial state of $\cA_\cT$ to be $q_\cT$ and the final state to be $q_\emptyset$. To illustrate, the construction of $\cA_\cT$ is shown in Figure~\ref{fig:automaton} for a phylogenetic tree on four leaves.

\begin{figure}
\centering
\includegraphics[width=.78\textwidth]{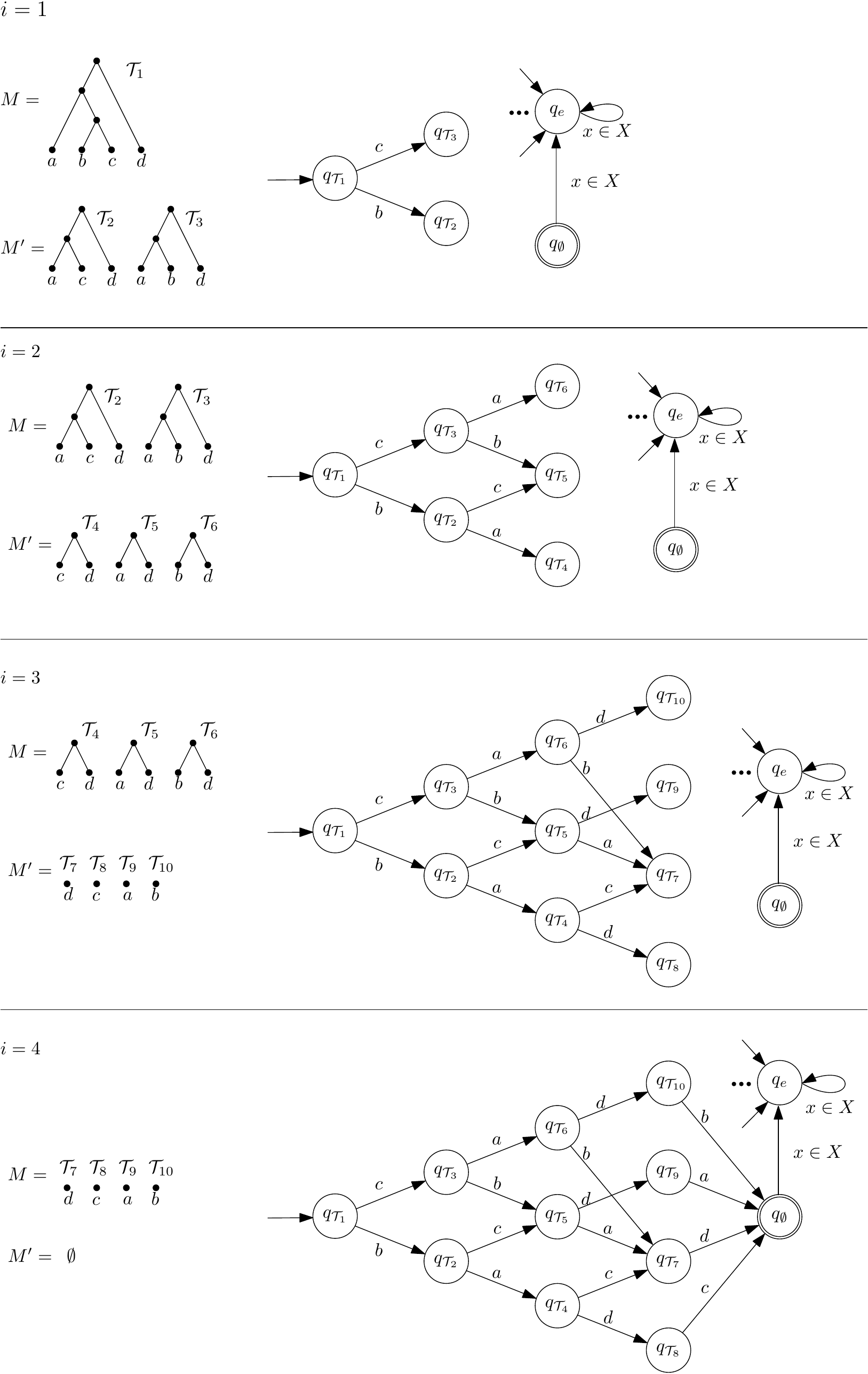}
\caption{Construction of an automaton that recognizes the language $\cL_X(\cT_1)$ as described in the statement of Lemma~\ref{lem:automaton} and with $\cT_1$ shown in the top left of this figure. Each vertex (resp. edge) represents a state (resp. transition). The vertex $q_{\cT_1}$ indicates the initial state whereas the final state is $q_\emptyset$ as indicated by a double circle. To increase readability, most transitions to $q_e$ are omitted.
In row $i$, the figure shows $M$, $M'$, and the automaton after the $i$th execution of the for-loop as  described in Step (2) in the proof of Lemma~\ref{lem:automaton}.}
\label{fig:automaton}
\end{figure}

By construction, we have $\cA_\cT=(\cC(\cT),X,\delta,q_\cT,\{q_\emptyset\})$. As each cherry-picked tree in $\cC(\cT)$ is mapped to a unique state, it follows from  Lemma~\ref{lem:|C(T)|} that the number of states of $\cA_\cT$ is $O(|X|^{4c_\cT-2})$.
Moreover, for each $a\in X$ and each pair of two distinct states $\cT',\cT'' \in \cC(\cT)$, there is a transition $\delta(q_{\cT'}, a) = q_{\cT''}$ if and only if $\cT'[-a] = \cT''$ and $a$ is either a leaf of a cherry in $\cT'$ or $\cT'$ consists of the single vertex~$a$. The state $q_e$ collects all inputs that do not correspond to the continuation of a cherry-picking sequence. More precisely, there is a transition $\delta(q_{\cT'}, a) = q_{e}$  if and only if $a$ is not a leaf of a cherry in $\cT'$ and $\cT'$ does not consist of the single vertex $a$.  It now follows that there is a one-to-one correspondence between the directed paths from $q_\cT$ to $q_\emptyset$ in $\cA_\cT$ and the cherry-picking sequences of $\cT$ and, hence, $\cA_\cT$ recognizes $\cL_{X}(\cT)$.  

The time taken to construct $\cA_\cT$ is dominated by the number of iterations of the for-loop in Step $(2)$. Since $|M| < |\cC(\cT)|$ and $|\cC(\cT)| \in O(|X|^{4c_\cT - 2})$, the number of iterations in Step (2) is $O(|\cC(\cT)|\cdot|X|) \subseteq O(|X|^{4c_\cT - 1})$. 
Moreover, since Step (2) is executed $|X|$ times, each operation of the for-loop is executed $O(|X|^{4c_\cT})$ times in total. While the complexity of these operations depend on the implementation and data structure, they can clearly be implemented such that $\cA_\cT$ can be constructed in time $|X|^{O(c_\cT)}$. This establishes the lemma. \qed
\end{proof}

Generalizing the language that is described in the statement of Lemma~\ref{lem:automaton}, the next straightforward observation describes a language for the decision problem {\sc CPS-Existence}.

\begin{observation}\label{obs:CPS_language_problem}
Let $\cP=\{\cT_1,\cT_2,\ldots,\cT_m\}$ be a collection of rooted binary phylogenetic $X$-trees. Then, solving {\sc CPS-Existence} for $\cP$ is equivalent to deciding if
\[
\bigcap_{1\leq i\leq m} \cL_X(\cT_i) \neq \emptyset.
\]
\end{observation}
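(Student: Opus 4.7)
The plan is to prove both directions of the equivalence by unpacking the definition of a cherry-picking sequence and showing that a cherry-picking sequence for $\cP$ is precisely a word that lies simultaneously in each individual language $\cL_X(\cT_i)$. Recall from Section~\ref{sec:prelim} that an ordering $(x_1,x_2,\ldots,x_{|X|})$ on $X$ is a cherry-picking sequence for $\cP=\{\cT_1,\ldots,\cT_m\}$ if and only if, for every $i\in\{1,2,\ldots,|X|-1\}$, $x_i$ labels a leaf of a cherry in \emph{each} tree of $\cP[-x_1,\ldots,x_{i-1}]$. Since $\cP[-x_1,\ldots,x_{i-1}]=\{\cT_1[-x_1,\ldots,x_{i-1}],\ldots,\cT_m[-x_1,\ldots,x_{i-1}]\}$, this condition is equivalent to saying that, for each fixed $j\in\{1,\ldots,m\}$, the same ordering is a cherry-picking sequence for the single tree $\cT_j$.

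For the forward direction, I would start from a cherry-picking sequence $(x_1,\ldots,x_{|X|})$ for $\cP$ and observe that the above equivalence shows $(x_1,\ldots,x_{|X|})$ is a cherry-picking sequence for each individual tree $\cT_j$; by the definition of $\cL_X(\cT_j)$ in the statement of Lemma~\ref{lem:automaton}, this yields $x_1x_2\cdots x_{|X|}\in \cL_X(\cT_j)$ for every $j$, so the word lies in the intersection, which is therefore nonempty. For the reverse direction, I would pick any word $w=x_1x_2\cdots x_{|X|}$ in $\bigcap_{1\leq i\leq m}\cL_X(\cT_i)$; since $w\in \cL_X(\cT_i)$ forces $w$ to have length exactly $|X|$ with each symbol of $X$ appearing exactly once (this is built into the definition of $\cL_X(\cT_i)$) and $(x_1,\ldots,x_{|X|})$ to be a cherry-picking sequence for each $\cT_i$, the same ``for each tree'' equivalence above shows that $(x_1,\ldots,x_{|X|})$ is a cherry-picking sequence for $\cP$, answering {\sc CPS-Existence} affirmatively.

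There is no real obstacle here: the statement is essentially a rephrasing, and the argument is a one-step unwinding of definitions once Lemma~\ref{lem:automaton} has pinned down what $\cL_X(\cT_i)$ means. The only point worth being careful about is that a word in $\cL_X(\cT_i)$ is automatically a permutation of $X$ (so taking the intersection does not introduce length mismatches between the trees), and that the condition ``$x_i$ is a leaf of a cherry in every tree of $\cP[-x_1,\ldots,x_{i-1}]$'' is literally the conjunction over $i$ of the corresponding single-tree conditions; both are immediate from the setup, which is why the result is stated as an observation rather than a lemma.
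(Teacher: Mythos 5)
Your argument is correct and matches the paper's treatment: the paper states this as an observation with no written proof, precisely because it follows by the definitional unwinding you give (a cherry-picking sequence for $\cP$ is, by definition, simultaneously a cherry-picking sequence for each $\cT_i$, hence a word in each $\cL_X(\cT_i)$, and conversely). Your extra care about words in $\cL_X(\cT_i)$ being permutations of $X$ is a reasonable point to make explicit, but nothing beyond the paper's intended one-line justification.
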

We are now in a position to establish Theorem~\ref{t:cherry}.

\noindent{\it Proof of Theorem~\ref{t:cherry}.}
By Observation~\ref{obs:CPS_language_problem}, it follows that there is a cherry-picking sequence for $\cP$ if and only if
$
\bigcap_{\cT_i \in \cP} \cL_X(\cT_i) \neq \emptyset, 
$
where 
\[
\cL_{X}(\cT) = \{x_1x_2\ldots x_{|X|} \mid (x_1,\,x_2,\,\ldots,\,x_{|X|}) \text{ is a cherry-picking sequence for } \cT \}.
\]

For each $\cT_i \in \cP$ with $1 \leq i \leq m$, we follow the notation and construction that is described in the proof of Lemma~\ref{lem:automaton} to obtain an automaton~$\cA_{\cT_i}$ with $O(|X|^{4c_{\cT_i}-2})$ states that recognizes the language~$\cL_X(\cT_i)$. To solve the question whether or not the intersection of these $m$ languages is empty, we use the well-known construction of a product automaton \cite{kozen97} as follows.


For each $\cT_i \in \cP$, let $Q_{\cT_i}$ be set of states, and let $\delta_{\cT_i}$ be the transition relation of $\cA_{\cT_i}$. We construct a new automaton $\cA_{\cP}$, where the set of states $Q_\cP$ is the cartesian product $Q_{\cT_1} \times Q_{\cT_2}\times \ldots \times Q_{\cT_{m}}$. Furthermore, the alphabet of $\cA_{\cP}$ is $X$ and the transition relation $\delta_\cP \colon Q_\cP \times X \rightarrow Q_\cP$ is defined as 
\[
\delta_\cP ((q_1,\ldots,q_m), a) = (\delta_{\cT_1}(q_1, a),\, \ldots,\, \delta_{\cT_m}(q_m, a)).
\]
Lastly, the initial (resp. final) state of $\cA_\cP$ is $(q_1,\ldots,q_m)$ where, for all $i\in\{1,2,\ldots,m\}$, $q_i$ is the initial (resp. final) state of $\cA_{\cT_i}$. Intuitively, $\cA_\cP$ simulates the parallel execution of the automata $\cA_{\cT_1},\cA_{\cT_2},\ldots,\cA_{\cT_m}$. By construction, an input sequence is accepted by $\cA_\cP$ if and only if it is accepted by each automaton $\cA_{\cT_i}$. It now follows that there is a cherry-picking sequence for $\cP$ if and only if the final state of $\cA_\cP$ can be reached from the initial state of $\cA_\cP$ and, hence, $\cL(\cA_\cP) \neq \emptyset$. 


It remains to show that the computational complexity is as claimed in the statement of the theorem. Viewing $\cA_\cP$ as a directed graph, each directed path from the initial to the final state of $\cA_\cP$ has length $|X|$. We can therefore decide whether the final state from $\cA_\cP$ is reachable by using breadth-first search \cite{cormen09} in time $O(|Q_\cP| + |X| \cdot |Q_\cP|)$, where $|X| \cdot |Q_\cP|$ is the number of transitions in $\cA_\cP$. By construction and Lemma~\ref{lem:automaton}, it follows that $\cA_\cP$ has
\[
O\left(\prod_{i=1}^m |X|^{4c_{\cT_i}-2}\right) \subseteq O(|X|^{m(4c-2)})
\]  
states, i.e. $|Q_\cP| \in O(|X|^{m(4c-2)})$. Hence, we can decide in time $O(|X|^{m(4c-2)+1})$ whether $\cL(\cA_\cP) \neq \emptyset$. By Lemma~\ref{lem:automaton}, it takes time $f_i(|X|, c_{\cT_i}) \in |X|^{O(c_{\cT_i})}$ to construct each automaton $\cA_{\cT_i}$ and, thus, it follows that deciding if there is a cherry-picking sequence for $\cP$ can be done in time
\[
O\left (|X|^{m(4c-2)+1} + \sum_{i=1}^m f_i(|X|, c_{\cT_i})\right ),
\]
which is polynomial in $|X|$ if $c$ and $m$ are constant.
\qed\\

\section{Concluding remarks}\label{sec:con}
In this paper, we have shown that {\sc CPS-Existence}, a problem of relevance to the construction of phylogenetic networks from a set of phylogenetic trees, is NP-complete for all sets $\cP$ of rooted binary phylogenetic trees with $|\cP|\geq 8$. This result partially answers a question posed by Humphries et al.~\cite{humphries13}. They asked if {\sc CPS-Existence} is computationally hard for $|\cP|=2$. To establish our result, we first showed that {\sc $4$-Disjoint-Intermezzo}, which is a variant of the {\sc Intermezzo} problem that is new to this paper,  is NP-complete. Subsequently, we established a reduction from an instance $I$ of {\sc $4$-Disjoint-Intermezzo} to an instance $I'$ of {\sc CPS-Existence} with $|\cP|=8$. Since each of the four collections of pairs and triples in $I$ reduces to two trees in $I'$, a possible approach to obtain a stronger hardness result for {\sc CPS-Existence} with $|\cP|<8$ is to show that  {\sc $N$-Disjoint-Intermezzo} is NP-complete for $N<4$. However, it seems likely that such a result can only be achieved by following a strategy that is different from the one that we used in this paper. In particular, there is no obvious reduction from {\sc 2P2N-3-SAT} to {\sc $3$-Disjoint-Intermezzo}. Moreover, {\sc $1$-Disjoint-Intermezzo} is solvable in polynomial time since all pairs and triples are pairwise disjoint and, so, it cannot be used for a reduction even if {\sc CPS-Existence} turns out to be NP-complete for $|\cP|=2$.

In the second part of the paper, we have translated {\sc CPS-Existence} into an equivalent problem on languages and used automata theory to show that {\sc CPS-Existence} can be solved in polynomial time if the number of trees in $\cP$ and the number of cherries in each such tree are bounded by a constant. There are currently only a small number of other problems in phylogenetics that have been solved with the help of automata theory (e.g.~\cite{hall10,westesson12}) and it is to be hoped that the results presented in this paper will stimulate further research to explore  connections between  combinatorial problems in phylogenetics and automata theory. 

{\bf Acknowledgements} 

We thank Britta Dorn for insightful comments on a draft version of this paper. The second author was supported by the New Zealand Marsden Fund.

{\bf References}

\end{document}